\pgfplotsset{compat=1.18}
\def\NAT@def@citea{\def\@citea{\NAT@separator}}
\theoremstyle{plain}
\newtheorem{theorem}{Theorem}[section]
\newtheorem{lemma}[theorem]{Lemma}
\newtheorem{corollary}[theorem]{Corollary}
\newtheorem{proposition}[theorem]{Proposition}
\theoremstyle{definition}
\theoremstyle{remark}
\def\U{\mathcal{U}}
\def\L{\mathcal{L}}
\def\I{\mathcal{I}}
\def\B{\mathcal{B}}
\def\Z{\mathbb{Z}}
\def\rank{\operatorname{r}}
\def\lopt{F}
\def\bmin{\B^{\min}}
\def\OPT{\operatorname{OPT}}
\def\sm{\setminus}
\def\se{\subseteq}
\def\propbound{P_{\text{bound}}}
\def\propiinx{P_{\text{succ}}}
\def\propapprox{P_{\text{approx}}}
\numberwithin{equation}{section}
\begin{document}

\title{Interdiction of minimum spanning trees and other matroid bases}

\author{
Noah Weninger and Ricardo Fukasawa\\
Department of Combinatorics \& Optimization, University of Waterloo, Canada\\
\texttt{\{nweninger,rfukasawa\}@uwaterloo.ca}
}

\maketitle

\begin{abstract}
In the minimum spanning tree (MST) interdiction problem,
we are given a graph $G=(V,E)$ with edge weights, and
want to find some $X\se E$ satisfying a knapsack constraint such
that the MST weight in $(V,E\sm X)$ is maximized.
Since MSTs of $G$ are the minimum weight bases in the graphic
matroid of $G$, this problem is a special case of {\em matroid interdiction} on a matroid $M=(E,\I)$,
in which the objective is instead to maximize the minimum weight of a basis of $M$ which is disjoint from $X$.
By reduction from 0-1 knapsack, matroid interdiction is NP-complete, even for uniform matroids.

We develop a new exact algorithm to solve the matroid interdiction problem. 
One of the key components of our algorithm is a dynamic programming upper bound which only requires that a simpler 
discrete derivative problem can be calculated/approximated for the given matroid.
Our exact algorithm then uses this bound within a custom branch-and-bound algorithm.

For different matroids, we show how this discrete derivative can be calculated/approximated. 
In particular, for partition matroids, this yields a pseudopolynomial time algorithm. 
For graphic matroids, an approximation can be obtained by solving a sequence of minimum cut problems, which we apply to the MST interdiction problem.
The running time of our algorithm is asymptotically faster than the best known
MST interdiction algorithm, up to polylog factors.
Furthermore, our algorithm achieves state-of-the-art computational performance:
we solved all available instances from the literature, and in many cases
reduced the best running time from hours to seconds.
\end{abstract}

\section{Introduction}
\label{sec:intro}
In an {\em interdiction problem}, we seek to solve some combinatorial optimization problem on
a discrete structure, but there is an adversary that can interdict (i.e., destroy)
some parts of the structure before the optimization problem is solved \cite{smith2020survey}.
The adversary aims to make the optimal solution value of the optimization problem as bad as possible.
Typically, for each part of the structure that is interdicted, the adversary has to pay a price,
and they have a limited budget (i.e., they are restricted by a knapsack constraint).

Interdiction has been studied since the 1950s, and was originally motivated by military applications,
but is now seen to be widely applicable.
For instance, the minimum $s$-$t$ cut problem can be seen
as the problem of interdicting $s$-$t$ connectivity in a graph by deleting edges;
this now-ubiquitous problem originated in a 1955 US Air Force secret report with the objective
of disrupting the Soviet railway system \cite{schrijver2002history}.
Another commonly studied interdiction problem is shortest $s$-$t$ path interdiction,
in which the adversary tries to increase the shortest $s$-$t$ path length in a graph as much as possible
by deleting edges \cite{israeli2002shortest}. Although the discrete structure is often a graph, other examples
exist, such as knapsack interdiction \cite{weninger2023fast}, in which the adversary deletes items from a knapsack,
or matroid interdiction, the focus of this paper.

In recent years, interdiction problems have typically been interpreted as 
{\em bilevel integer programming} (BIP) problems, which can be viewed as a generalization of
integer programming (IP) to two-round two-player games. In a BIP,
there are two IPs, called the upper level and lower level, between which some variables are shared.
An optimal solution to a BIP problem can be seen as the optimal first-player strategy in a two player game.
In this game, the first player, called the leader, chooses a point $X$ in the upper level feasible region.
The second player, called the follower, then picks an optimal solution $Y$ for the lower level IP,
for which feasibility and optimality may depend on the choice of $X$.
In the most general form, the feasibility of $X$ can also depend on the choice of $Y$.
The game then ends and the objective function
is calculated as a linear function of $X$ and $Y$. The goal is for the leader to pick an $X$ which
maximizes (or minimizes) this objective function. The terms leader and follower
are borrowed from the theory of Stackelberg games \cite{von1952theory}.

The advantage of viewing interdiction problems as BIPs is that a significant
amount of work has been written on general purpose algorithms for solving BIPs \cite{kleinert2021survey}.
However, a key challenge in this approach is that most BIPs are $\Sigma_2^p$-hard and
hence cannot be modelled as IPs
without using an exponential number of constraints and/or variables.
So, typically these algorithms reduce BIP to the problem of solving a series of IPs---exponentially
many in the worst case.
However, while these general purpose methods can solve small problem
instances of most interdiction problems, current solvers are typically very poor
at identifying and exploiting the (abundant) structure present in interdiction
problems. As such, performance improvements of many orders of magnitude are
typically seen when switching from general purpose BIP algorithms to
problem-specific interdiction algorithms. This is in contrast to many
classical combinatorial optimization problems,
where general purpose IP solvers are often very competitive.

Matroid interdiction is only NP-hard (as opposed to $\Sigma_2^p$-hard),
but there is a lack of algorithms for solving it which are fast in practice.
The objective of this work is to further progress our understanding of the
structure present in interdiction problems by the development of problem-specific
algorithms for matroid interdiction. We hope that these insights can later be integrated
into general purpose BIP solvers.

\subsection{Problem statement}
\label{sec:problemstatement}

We assume the reader already has basic familiarity with matroids; the below definitions serve
primarily to standardize notation. For a relevant introduction to matroids,
see Chapter 8 in \cite{cunningham1998combinatorial}.
A matroid $M=(E,\I)$ consists of a set $E$, called the {\em ground set}, and
$\I$, a family of subsets of $E$, called the {\em independent sets}.
If $J\in\I$, we say $J$ is {\em independent}.
For $M$ to be a matroid, we must have:
\begin{enumerate}
	\item $\emptyset\in\I$,
	\item for all $A\in\I$ and $B\se A$, $B\in\I$ (i.e., $\I$ is {\em downwards closed}), and
	\item if $A$ and $B$ are maximal independent subsets of a set $J\se E$, then $|A|=|B|$.
\end{enumerate}
We call maximal sets in $\I$ {\em bases}.
We work in the oracle model: we assume we have a function
which, given a set $J\se E$, answers whether $J\in\I$ in a single unit of time.
Due to these properties, a minimum (respectively maximum) weight basis can
be found using a greedy algorithm which adds elements in order of
non-decreasing (non-increasing) weight whenever doing so retains independence \cite{cunningham1998combinatorial}.
For any $A\se E$, we define the {\em rank} of $A$ to be $\rank(A):=\max\{|B|:B\se A,B\in\I\}$.
Note that $B$ is a basis if and only if $\rank(B)=\rank(E)$.

In the paper we refer explicitly to three types of matroids: uniform, partition, and graphic.
The independent sets $\I$ of a {\em uniform} matroid on a ground set $E$ are defined by
$\I=\{S\se E:|S|\le k\}$ for some given $k$.
Partition matroids generalize uniform matroids: they are defined by a partition
$E_1\cup E_2\cup\dots\cup E_{\ell}=E$ and integers $k_1,k_2,\dots,k_{\ell}$;
the independent sets are $\I=\{S\se E:|S\cap E_i|\le k_i~\forall i\in\{1,\dots,\ell\}\}$.
The {\em direct sum} of two matroids $M_1=(E_1,\I_1)$ and $M_2=(E_2,\I_2)$
where $E_1\cap E_2=\emptyset$
is a matroid $M=(E,\I)$ where $E=E_1\cup E_2$ and
$\I=\{S:S\cap E_1\in\I_1\text{ and }S\cap E_2\in\I_2\}$.
All uniform matroids are partition matroids, but unlike uniform matroids,
the class of partition matroids is closed under direct sum.
Given an undirected graph $G=(V,E)$, the {\em graphic} matroid $M=(E,\I)$ defined on $G$ has
$\I=\{S\se E:S\text{ is acyclic}\}$. Throughout the paper, we assume that $G$ is connected.
 So, the bases of the graphic matroid
are the maximal acyclic subsets of edges of $G$, i.e., they are spanning trees.

Let $M=(E,\I)$ be a matroid and let $m=|E|$.
We assume throughout the paper that $E=\{1,2,\dots,m\}$.
An instance of the matroid interdiction problem is defined by $M$,
a weight vector $w\in\Z^m$, a cost vector $c\in\Z^m_{\ge0}$,
and a capacity $C\in\Z_{\ge0}$.
Let $\B=\{A\se E:\rank(A)=\rank(E)\}$ be the set of bases of $M$.
For convenience, we write $x(S)=\sum_{i\in S}x_i$ for a vector $x$ and set $S$.
The matroid interdiction problem can then be stated as follows.
\begin{gather}
	\OPT=\max_{X\in\U}\,\min_{Y\in \L(X)}\,w(Y) \label{eq:matint} \tag{MI} \\
	\begin{aligned}
		\text{where }\U&=\left\{X\se E:c(X)\le C\right\}, \;\; \mbox{(upper level)}\\
		\text{and }\L(X)&=\left\{Y\se E\sm X:Y\in\B\right\}.\;\;\mbox{(lower level)}
	\end{aligned} \nonumber
\end{gather}
We often say that the upper-level variables $X$ are selected by a decision maker
called the {\em leader}, and the lower-level variables $Y$ are selected by another
decision maker called the {\em follower}.
We call this the {\em max-min variant} of the problem; the {\em min-max variant} swaps both max for min
and min for max in the objective function.
If we are concerned with exact solutions, the
choice of working with the $\max$-$\min$ variant is arbitrary
because negating the weights switches between the two variants:
\begin{equation}
\arg\max\{\min\{w(Y):y\in\L(X)\}:X\in\U\}=-\arg\min\{\max\{-w(Y):Y\in\L(X)\}:X\in\U\}.
\end{equation}
It is easily seen that we may assume $w\ge1$ by adding a sufficiently
large constant to all weights, and we may assume $c\ge1$ because any element $e$ with
$c_e=0$ can be deleted without affecting the optimal solution.

As is typical in bilevel programming, we say that a solution $(X,Y)$ is {\em feasible}
if $X\in\U$ and $Y\in\arg\min\{w(Y):Y\in\L(X)\}$, that is, to have bilevel feasibility
$Y$ must be optimal for the lower-level problem.
Since the minimum weight basis of a matroid can be found in polynomial time by the greedy algorithm, feasibility
can be verified in polynomial time, and hence the matroid interdiction decision problem is in NP.

We adopt some standard terminology from matroid theory: a matroid which cannot be written
as a direct sum of two nonempty matroids is called {\em connected}, and a maximal connected
submatroid of a matroid is called a {\em component} of that matroid.
So, if $C_1$ and $C_2$ are components of a matroid $M$, $I_1$ is independent in $C_1$ and
$I_2$ is independent in $C_2$, then $I_1\cup I_2$ is independent in $M$.
Throughout the paper, we assume that
\begin{equation}
\text{if $i$ and $j$ are elements in the same component of a matroid
and $i<j$, then $w_i\le w_j$.} \tag{A1}\label{ass:order}
\end{equation}
While there may be multiple minimum weight bases of a matroid,
given that $E=\{1,\dots,m\}$, there is always a unique {\em lexicographically smallest basis},
and if \labelcref{ass:order} is satisfied, then by the greedy algorithm the lexicographically smallest basis is of minimum weight.
We assume that the ordering of matroid elements satisfies \labelcref{ass:order} throughout the paper.
Given a matroid  $M=(E,\I)$, let $\bmin_{\!M}(X)$
be the lexicographically smallest basis $B\se E\sm X$ of $M$.
This function is only defined for those $X$ where there exists a basis $B\se E\sm X$.
We may omit the subscript $M$ when it is implied by the context.

Given the lexicographically smallest basis $B$ of a matroid $M$
and an element $e\in B$,  we say that the {\em replacement element} for $e$
(if one exists) is the element $e'$ such that $\{e'\}=B'\sm B$ where $B'$ is the lexicographically smallest
basis of $M\sm \{e\}$. Note that the replacement element can be found by the greedy algorithm.

A closely related problem to matroid interdiction is the {\em minimum cost matroid blocker problem},
which can be formulated as follows.
\begin{align*}
	&\min_{X\se E} c(X)\\
	&\text{s.t. }\left[\min_{Y\in\L(X)} w(Y)\right]\ge R
\end{align*}
Here, $R$ is a given constant which can be though of as the {\em target weight}.
It is easy to see that these two problems are polynomial-time equivalent by performing
binary search on $C$ or $R$. We focus on the matroid interdiction problem for the majority of the paper,
but in \cref{sec:mincostblocker} we show that our algorithm can easily be adapted to directly solve
the minimum cost blocker problem without incurring the run time overhead of binary search.

\subsection{Prior work}

The simplest and earliest-studied special case of matroid interdiction
is the {\em most vital MST edge} problem, which was introduced by Hsu et al
in 1991 \cite{hsu1991finding}. In this variant we wish to find a single edge $e$
which maximizes the MST weight of $G\sm\{e\}$.
This is the special case of matroid
interdiction where $M$ is graphic, $c_e=1$ for all $e\in E$, and $C=1$. A long line
of work led to a recent algorithm which solves this problem in
time $O(m\log m+n)$, or $O(m+n)$ if the edges are assumed to already be sorted
by weight \cite{bader2019simple}. A few works have studied variants of the most vital matroid element problem:
one paper performs sensitivity analysis on minimum weight matroid bases \cite{libura1991sensitivity},
another studies the variant where the leader can
increase the weights of the matroid elements by paying a cost per unit of weight increased
\cite{frederickson1998algorithms},
and a very recent paper concerns parametric matroid interdiction \cite{hausbrandt2024parametric}.

A natural generalization of this is the $k$-most vital MST edges problem, in which $M$ is
graphic, $c_e=1$ for all $e\in E$ and $C=k$ for some constant $k$.
NP-hardness of the $k$-most vital edges
problem was shown in \cite{frederickson1999increasing} by reduction from minimum $k$-cut.
Later, this hardness result was extended to show NP-hardness for the variant of the problem where vertices
are interdicted instead of edges \cite{bazgan2013critical}.
The same paper also includes various approximation algorithms and inapproximability results.
Since the problem is only NP-hard when $k$ is part of the input, an interest developed in algorithms
which run in polynomial time for fixed $k$ \cite{shen1999finding}.
After a series of papers, a running time of $O(n^k\log\alpha((k-1)(n+1),n))$ was achieved
\cite{liang2001finding,bazgan2012efficient}.
Most recently, new mixed integer programming formulations and branch-and-bound algorithms
were proposed for the $k$-most vital MST edges problem
which have good performance in computational tests for small $k$
(i.e., $k\le 5$ on instances with hundreds of vertices)
\cite{bazgan2012efficient}. We compare our results to this paper in \cref{sec:comp}.

The first paper to study MST interdiction (then called the
{\em most vital edges in the MST} problem) in full generality presents a simple
branch-and-bound algorithm \cite{lin1993most}.
To the best of our knowledge, this algorithm was never evaluated computationally.
A later paper also established an $O(\log n)$-approximation
algorithm \cite{frederickson1999increasing}.
This was subsequently improved to a 14-approximation \cite{zenklusen20151},
and eventually a $4$-approximation \cite{linhares2017improved}. The paper which shows
a $4$-approximation also proves the surprising result that the
{\em maximum} spanning tree interdiction problem cannot be approximated to any constant factor 
under the small set expansion hypothesis.
However, a 2-pseudo-approximation (a polytime algorithm that either returns a 2-approximate
solution or a solution that exceeds the budget by a factor of 2) is known for min-max matroid interdiction, and hence for maximum spanning tree interdiction \cite{chestnut2017interdicting}.
In fact, this pseudo-approximation even works for more general case of interdicting the
maximum independent set weight, rather than just the maximum basis weight, as well as for a
wide variety of other interdiction problems.

In 2024, Luis Salazar-Zendeja published their PhD thesis on MST interdiction, which
proposes and computationally evaluates a variety of MIP-based approaches \cite{salazar2022models}.
The computational results are not compared to prior works, but the largest instances considered
(in terms of number of vertices/edges and capacity) are smaller than those
studied in previous papers, and the running times are of the same order of magnitude.
Important contributions of the thesis include the application of classical but previously unexplored
techniques including branch-and-price and Bender's decomposition,
and the generalization of all results to the partial interdiction variant of the problem, in which the
leader can only slightly increase the weight of MST edges rather than block them entirely.

The MST edge blocker problem is the focus of \cite{wei2021integer},
which studies various integer programming formulations for the problem and evaluates
their performance computationally.
We compare our results to this paper in \cref{sec:comp}.
A follow-up paper generalizes the theoretical aspects of this work to greedoids,
among other problems \cite{wei2022integer}.

A number of closely related problems have been studied.
One is the {\em rank reduction} problem,
where we wish to find a minimum cardinality subset of the elements of a matroid
whose deletion reduces the rank by at least $k$ \cite{joret2015reducing}.
Although many parallels can be drawn between the known hardness results for rank reduction
and matroid interdiction, we are not aware of any work which connects them explicitly.
Another recent paper studies the interdiction problem where both the upper and lower
level feasible regions are given by the independent sets of partition matroids over
a common ground set \cite{ketkov2024class}.
Finally, our recent work~\cite{weninger2023fast} on knapsack interdiction also uses branch-and-bound with
bounds computed via dynamic programming, and achieves state-of-the-art computational
results. The success of those techniques motivated
our approach in this work.

\subsection{Our contributions}

In \cref{sec:ub}, we introduce a new framework for computing upper bounds for 
interdiction problems. We apply this framework to exactly solve the partition
matroid interdiction problem, and derive an upper bound for MST interdiction.
In \cref{sec:exact} we present an exact branch-and-bound algorithm for solving matroid interdiction
utilizing the bounds derived in the previous section.
We prove that even without the upper bounds, this algorithm matches (up to polylog factors)
or improves upon the best known
asymptotic running time of an enumerative algorithm for MST interdiction with unit costs.
Furthermore, our algorithm is more general as it applies to any matroid interdiction problem.
We also introduce a strong heuristic lower bound to initialize the branch-and-bound search.
In \cref{sec:comp} we compare an implementation of our algorithm in computational tests
to previously published algorithms. We find that our algorithm is significantly faster
than previous approaches, often by a few orders of magnitude.
We also generate some new problem instances, which we use to investigate what qualities make an instance
hard to solve, and how the various features of our solver contribute to its performance.
Finally, in \cref{sec:dual} we investigate the problem variant in which the leader
can force elements to be included in every basis, rather than excluded.

\section{Upper bounds}
\label{sec:ub}

\def\bound{\delta}
\def\update{\pi}
\def\state{\phi}

In this section, we describe a general framework for upper bounding
interdiction problems with knapsack-like upper level feasibility sets.
The key idea is to reduce the interdiction problem to a type of knapsack problem
where the item profits bound the change in the lower level
objective function when a given element is interdicted. We demonstrate that
for many matroids, when the lower level problem is to find the minimum basis weight,
this discrete-derivative-like problem is easier to solve
than the full interdiction problem and yields good bounds or even
exact solutions when integrated with dynamic programming (DP).
This way of computing upper bounds
also is ideally suited for use in a branch-and-bound scheme,
because the bound can be precomputed for every branch-and-bound node and
accessed in $O(1)$ time. Our exact branch-and-bound algorithm for matroid
interdiction which uses these bounds is described in \cref{sec:exact}.

We assume for now that the set of feasible interdictions is $\U=\{X\se E:c(X)\le C\}$,
as described in Section~\ref{sec:intro}, although later we argue that this can be relaxed somewhat.
From this point until the start of \cref{sec:uniform},
the only assumption that we make about $\L(X)$ is that if $Y\in\L(X)$, then $Y\se E\sm X$.
Let $F(X)=\min\{w(Y):Y\in\L(X)\}$ be the optimal
solution to the follower's problem induced by a given interdiction set $X\in\U$,
or $\infty$ if the follower's problem is infeasible for the given $X$.
The value we desire to upper bound is $\OPT=\max\{F(X):X\in\U\}$.

For a set $X\se\Z$, let $X_{\le k}$ be the set $\{j\in X:j\le k\}$,
and similarly for $X_{\ge k}$, $X_{<k}$, etc.
Given any $\hat X\in\U$ and $i\in E$, we can express $\max\{\lopt(X):X\in\U,X_{<i}=\hat X\}$
as $\lopt(\hat X)+\max\{\lopt(X)-\lopt(\hat X):X\in\U,X_{<i}=\hat X\}$.
In this section, we show that this view offers a clear advantage: upper bounds on
$\max\{\lopt(X)-\lopt(\hat X):X\in\U,X_{<i}=\hat X\}$ can be computed for all $i\in E$ and
$\hat X\in\U$ as the entries of a DP table.
Moreover, this structure will be exploited in our branch-and-bound algorithm (presented in \cref{sec:exact}), in which
branch-and-bound nodes are identified by $\hat X$ and $i$.

Before we present our DP algorithm, we comment that we formalize only the essential
aspects that are needed for such description, for the sake of generality. In particular, we choose to be
informal regarding some of the desired aspects, using terms like `small enough' or
`large enough', since these are guiding principles that need to be made concrete depending on the specific lower level problem.

The dynamic program will be over a set of possible states $T$ and requires mapping
each $X\in\U$ to some {\em state} $s\in T$ by a function $\state(X)$.
The goal is for $|T|$ to be small enough that we can use $s\in T$
to index the DP table (without using too much memory), but large enough that
$\state(\hat X)$ captures enough information about some $\hat X\in\U$ to be
able to compute a good bound for $\max\{\lopt(X)-\lopt(\hat X):X\in\U:X_{<i}=\hat X\}$.
We also define a function $\update(i,s)$ which describes how a state $s\in T$
transitions to a new state if element $i$ is interdicted
when in state $s$. We formalize this property below as $\propiinx$.

Finally, we need a function $\bound(i,r,s)$ which upper bounds how much the follower's objective
can increase by if element $i$ is interdicted with $r$ knapsack capacity 
remaining when in state $s$.
We formalize this property below as $\propbound$.
The assumption that this function can be computed faster
than solving the entire interdiction problem is key to our approach.
 We devote most of the second half of this section to
defining a suitable $\bound$ for various matroids.

Formally, these functions have the following signatures:
\begin{align*}
&\state:\U\to T\\
&\update:E \times T\to T\\
&\bound:E \times\{0,1,\dots,C\}\times T\to\Z
\end{align*}
These will be defined precisely later, depending on the specifics of the follower's problem.
We now formally define the two conditions that these functions should satisfy to be able
to apply our results.
In the below definitions, $\U$ and $T$ are implicit based on
the definitions of the given functions.
Condition $\propbound(\bound,\state,\update)$ is satisfied if and only if
for all $i\in E$, $r\in\{0,1,\dots,C\}$ and $s\in T$,
\begin{align*}
c_i\le r\implies\bound(i,r,s)\ge\max\begin{Bmatrix*}[l]
	\lopt(X_{<i}\cup\{i\})-\lopt(X_{<i})\\
	\text{s.t. }X\in\U,\,
	c(X_{<i})=C-r,\,
	\state(X_{<i})=s
\end{Bmatrix*}.
\end{align*}
This condition is the most important since it ultimately defines what the upper bound is.
The other conditions and functions are defined to make sure that DP transitions
from one state to another are consistent and can be computed for all possible states.
The next condition, $\propiinx(\state,\update)$, is satisfied if and only if for all $X$ such that $X\cup\{i\}\in\U$,
\begin{align*}
\state(X_{<i}\cup\{i\})=\update(i,\state(X_{<i})).
\end{align*}
This condition intuitively says that transitioning from the state $\phi(X_{<i})$ by
interdicting $i$ should lead us to the state $\phi(X_{<i}\cup\{i\})$.
We now define a dynamic program $f:\{1,\dots,m+1\}\times\{0,\dots,C\}\times T\to\Z$.
\begin{align*}
f(i,r,s)=\begin{cases}
0&\text{if }i>m,\\
f(i+1,r,s)&\text{if }i\le m\text{ and }c_i>r,\\
\max\begin{Bmatrix*}[l]
	f(i+1,r,s),\\
	f(i+1,r-c_i,\update(i,s))+\bound(i,r,s)
\end{Bmatrix*}&\text{otherwise.}
\end{cases}
\end{align*}
Note that this is a generalization of the standard Bellman recursion for 0-1 knapsack
(e.g., see Section 2.3 of \cite{pisinger1998knapsack});
the only difference is that we keep track of additional state $s$, which is
managed by the update function $\update$,
and that the item profit values $\bound(i,r,s)$ are a function of both $i$, $r$, and $s$, as opposed to only $i$.
For example, suppose we wish to solve a knapsack problem with costs $c_i$, capacity $C$ and profits $p_i$.
Then we can simply define $T=\{\epsilon\}$, $\state(X)=\epsilon$, $\update(i,\epsilon)=\epsilon$, and $\delta(i,r,\epsilon)=p_i$
(where $\epsilon$ is just a null element used to ensure $T$ is nonempty);
then $f(1,C,\epsilon)$ will solve the knapsack problem.

The following theorem formalizes the above intuition using this definition.
It follows immediately that $\OPT\le f(1,C,\state(\emptyset))+\lopt(\emptyset)$
by applying the theorem with $i=1$ and $\hat X=\emptyset$.
To simplify notation in the next two proofs we define $\Phi(X)=(C-c(X),\,\state(X))$
and $\Pi(i,(r,s))=(r-c_i,\pi(i,s))$.
The intuition for this is that the remaining knapsack capacity $r=C-c(X)$ and the state $s=\phi(X)$ can both be treated as
parts of a larger state $\Phi(X)$ with transition function $\Pi$.
Then, it is easy to see that
for all $X$ such that $X\cup\{i\}\in\U$,
\begin{align*}
	\Phi(X_{<i}\cup\{i\})=\Pi(i,\Phi(X_{<i})).
\end{align*}
Some slight abuse of notation aside, this is equivalent to saying that $\propiinx(\Phi,\Pi)$ holds,
assuming that $\propiinx(\state,\update)$ does.
By doing this, we do not need to individually reason about both $r$ and $s$ in the proof, and can instead treat
them as a unit $(r,s)$.
To simplify notation we also interchangeably use $f(i,r,s)$ and $f(i,(r,s))$, and similarly for $\bound$ and $\pi$.
\begin{theorem}
\label{thm:ubCorrect}
Given an upper-level feasible set $\,\U$, state set $T$, and functions $\bound$, $\state$, and $\update$ satisfying
$\propbound(\bound,\state,\update)$ and
$\propiinx(\state,\update)$, we have that for all $i\in E$ and $\hat X\in\U$,
\[f(i,\Phi(\hat X_{<i}))\ge\max\{\lopt(X)-\lopt(X_{<i}):X\in\U,\,X_{<i}=\hat X_{<i}\}.\]
\end{theorem}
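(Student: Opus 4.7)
The plan is to prove the inequality by reverse induction on $i$, running from $i=m+1$ down to $i=1$, keeping the statement uniform in $\hat X\in\U$. For the base case $i=m+1$, the recurrence gives $f(m+1,\Phi(\hat X_{<m+1}))=0$, and since $E=\{1,\dots,m\}$ forces $X_{<m+1}=X$, the only feasible $X$ with $X_{<m+1}=\hat X_{<m+1}$ is $X=\hat X$, for which $\lopt(X)-\lopt(X_{<m+1})=0$.

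For the inductive step, I would fix $\hat X\in\U$, fix an arbitrary witness $X\in\U$ with $X_{<i}=\hat X_{<i}$, and show $f(i,\Phi(\hat X_{<i}))\ge\lopt(X)-\lopt(X_{<i})$; taking the max over $X$ at the end yields the claim. I split on whether $i\in X$. If $i\notin X$, then $X_{<i+1}=X_{<i}=\hat X_{<i}$, so applying the inductive hypothesis at index $i+1$ (with the same $\hat X$, and $X$ as the witness) gives $f(i+1,\Phi(\hat X_{<i}))\ge \lopt(X)-\lopt(X_{<i})$. From the recurrence, $f(i,r,s)\ge f(i+1,r,s)$ in both branches (directly if $c_i>r$; via the max otherwise), which closes this case.

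If $i\in X$, then $X_{<i}\cup\{i\}\se X$ gives $c_i\le C-c(X_{<i})=C-c(\hat X_{<i})=:r$, so the nontrivial branch of the recurrence applies:
\[f(i,r,s)\ge f(i+1,\Pi(i,\Phi(\hat X_{<i})))+\bound(i,r,s),\]
with $s=\state(\hat X_{<i})$. Using $\propiinx$ at the extended $(\Phi,\Pi)$ level, which the text already justifies, $\Pi(i,\Phi(\hat X_{<i}))=\Phi(\hat X_{<i}\cup\{i\})=\Phi(X_{<i+1})$. The inductive hypothesis at $i+1$ applied with $\hat X'=X$ and witness $X'=X$ yields $f(i+1,\Phi(X_{<i+1}))\ge\lopt(X)-\lopt(X_{<i+1})$. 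To handle the second summand, I invoke $\propbound$ \emph{with $\hat X$ playing the role of the quantified $X$ in its statement}: $\hat X\in\U$, $c(\hat X_{<i})=C-r$, and $\state(\hat X_{<i})=s$, so $\bound(i,r,s)\ge\lopt(\hat X_{<i}\cup\{i\})-\lopt(\hat X_{<i})=\lopt(X_{<i+1})-\lopt(X_{<i})$. Adding the two inequalities collapses the telescoping difference and gives $f(i,\Phi(\hat X_{<i}))\ge\lopt(X)-\lopt(X_{<i})$.

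The main subtlety, and the step that requires care, is the second case of the inductive step: the recurrence naturally suggests using the witness $X$ itself inside $\propbound$, but the quantifier in $\propbound$ ranges over sets whose low-index part equals $\hat X_{<i}$, so it is crucial to apply it with $\hat X$ rather than $X$ and then use $X_{<i}=\hat X_{<i}$ to match up the increments. Apart from this indexing care, the remaining work is purely bookkeeping with the $\Phi$, $\Pi$ abbreviations so that $r$ and $s$ can be manipulated as a single state.
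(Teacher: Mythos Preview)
Your proof is correct and follows the same reverse-induction strategy as the paper; your presentation is in fact somewhat more streamlined, since you fix a single witness $X$ and split on $i\in X$, whereas the paper proves a slightly stronger induction hypothesis over all pairs $(r,s)$ and manipulates the maxima directly. One minor slip: in the $i\notin X$ case, applying the inductive hypothesis at $i+1$ ``with the same $\hat X$'' is not quite right if $i\in\hat X$ (then $\hat X_{<i+1}\ne\hat X_{<i}$); you should instead take $\hat X'=X$, which works since $X_{<i+1}=\hat X_{<i}$ and yields exactly the conclusion you state.
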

\begin{proof}
We prove by induction on $i$ from $m+1$ to $1$ that for all $r\in\{0,1,\dots,C\}$ and $s\in T$,
\begin{align}
	f(i,r,s)&\ge\max\begin{Bmatrix*}[l]
		\lopt(X)-\lopt(X_{<i}):X\in\U,\,\Phi(X_{<i})=(r,s)
	\end{Bmatrix*}.
	\label{eqn:indHyp}
\end{align}
Since $X_{<i}=\hat X_{<i}$ implies $\Phi(X_{<i})=\Phi(\hat X_{<i})$,
this proves the theorem statement by taking $(r,s)=\Phi(\hat X_{<i})$:
\begin{align*}
	\labelcref{eqn:indHyp}&\ge\max\begin{Bmatrix*}[l]
		\lopt(X)-\lopt(X_{<i})
		:X\in\U,\,
		X_{<i}=\hat X_{<i}
	\end{Bmatrix*}.
\end{align*}
For the base case of the induction, suppose that $i=m+1$.
Since $E\sm X=E\sm X_{<m+1}$ for any $X\se E$,
we have $\lopt(X)=\lopt(X_{<i})$ and hence
$f(i,r,s)=0$ satisfies \cref{eqn:indHyp}.
Now assume $1\le i\le m$ and that \cref{eqn:indHyp} holds for
$i+1$ with arbitrary $r\in\{0,\dots,C\}$ and $s\in T$.
First suppose that $c_i>r$.
Then, applying the induction hypothesis (\cref{eqn:indHyp}),
\begin{align*}
	f(i,r,s)&=f(i+1,r,s)
	\ge \max\begin{Bmatrix*}[l]
		\lopt(X)-\lopt(X_{<i+1})
		:X\in\U,\,
		\Phi(X_{<i+1})=(r,s)
	\end{Bmatrix*}
\end{align*}
Since $c_i>r$ we have that $i\notin X$ for any $X\in\U$ with $C-c(X_{<i})=r$.
Therefore, for any such $X$, $X_{<i+1}=X_{<i}$, so we have $\Phi(X_{<i+1})=(r,s)$.
Hence, the result follows.
Now assume $c_i\le r$.
We claim that
\begin{align}
f(i+1,\Pi(i,r,s))+\bound(i,r,s)
\ge\max\begin{Bmatrix*}[l]
		\lopt(X)-\lopt(X_{<i}))
		:i\in X\in\U,\,
		\Phi(X_{<i})=(r,s)
	\end{Bmatrix*}.\label{eqn:claimInterdictCase}
\end{align}
To see this, first observe that by the induction hypothesis and $\propbound(\bound,\state,\update)$,
\begin{align}
	&f(i+1,\Pi(i,r,s))+\bound(i,r,s)\nonumber\\
	\ge&\max\begin{Bmatrix*}[l]
			\lopt(X)-\lopt(X_{<i+1})
			:X\in\U,\,
			\Phi(X_{<i+1})=\Pi(i,r,s)
		\end{Bmatrix*}\label{eqn:ub1}\\
	+&\max\begin{Bmatrix*}[l]
		\lopt(X_{<i}\cup\{i\})-\lopt(X_{<i})
		:X\in\U,\,
		\Phi(X_{<i})=(r,s)
	\end{Bmatrix*}\label{eqn:ub2}.
\end{align}
Now, by taking the intersection of the feasible sets $X$ in \cref{eqn:ub1,eqn:ub2}
and adding the objectives, we get that
\begin{align}
\labelcref{eqn:ub1}+\labelcref{eqn:ub2}\ge&\max\begin{Bmatrix*}[l]
		\lopt(X)-\lopt(X_{<i+1})
		+\lopt(X_{<i}\cup\{i\})-\lopt(X_{<i})\\
		\text{s.t. }X\in\U,\,
		\Phi(X_{<i+1})=\Pi(i,r,s),\,
		\Phi(X_{<i})=(r,s)
	\end{Bmatrix*}\label{eqn:ub3}.
\end{align}
If $X\in\U$ and $i\in X$, then since $\propiinx(\Phi,\Pi)$ holds we have $\Phi(X_{<i+1})=\Pi(i,\Phi(X_{<i}))$.
Hence,
\begin{align}
	\labelcref{eqn:ub3}\ge&\max\begin{Bmatrix*}[l]
		\lopt(X)-\lopt(X_{<i})
		:i\in X\in\U,
		\Phi(X_{<i})=(r,s)
	\end{Bmatrix*}.
\end{align}
This proves the claim.
We now complete the inductive step. The following is immediate from
the induction hypothesis and \cref{eqn:claimInterdictCase}.
\begin{align}
f(i,r,s)&=\max\{f(i+1,r,s),f(i+1,\Pi(i,r,s))+\bound(i,r,s)\}\nonumber\\
&\ge\max\begin{Bmatrix*}[l]
	\max\begin{Bmatrix*}[l]
		\lopt(X)-\lopt(X_{<i+1})
		:X\in\U,\,
		\Phi(X_{<i+1})=(r,s)
	\end{Bmatrix*},\\
	\max\begin{Bmatrix*}[l]
		\lopt(X)-\lopt(X_{<i})
		:i\in X\in\U,\,
		\Phi(X_{<i})=(r,s)
	\end{Bmatrix*}
\end{Bmatrix*}\label{eqn:ub5}.
\end{align}
Now, by restricting $X$ in the first inner max of \cref{eqn:ub5} to exclude $i$, we get
\begin{align}
\labelcref{eqn:ub5}
&\ge\max\begin{Bmatrix*}[l]
	\max\begin{Bmatrix*}[l]
		\lopt(X)-\lopt(X_{<i})
		:i\notin X\in\U,\,
		\Phi(X_{<i})=(r,s)
	\end{Bmatrix*},\\
	\max\begin{Bmatrix*}[l]
		\lopt(X)-\lopt(X_{<i})
		:i\in X\in\U,\,
		\Phi(X_{<i})=(r,s)
	\end{Bmatrix*}
\end{Bmatrix*}.\label{eqn:ub6}
\end{align}
Finally, we can merge the two inner max terms into a single term by noticing that
the first considers all $X$ with $i\notin X$ and the second considers all $X$
with $i\in X$, but they are otherwise identical.
\begin{align*}
\labelcref{eqn:ub6}
&=\max\begin{Bmatrix*}[l]
	\lopt(X)-\lopt(X_{<i})
	:X\in\U,\,
	\Phi(X_{<i})=(r,s)
\end{Bmatrix*}.
\end{align*}
This proves \cref{eqn:indHyp}, as desired.
\end{proof}
We will use this theorem in \cref{sec:exact} to compute upper bounds for our branch-and-bound scheme.
A natural question to ask is how loose/tight this upper bound can be.
The next theorem gives some answer to that question by
showing that if we have additive approximation bound of $\alpha$ for $\bound$,
there we have an additive approximation bound of $m\alpha$ for $f$.
We formalize the notion of having an additive approximation bound of $\alpha$ for $\bound$ with
the condition $\propapprox(\bound,\state,\update,\alpha)$, which is satisfied if and only if
for all $X\in\U$ and $i\in E$,
\begin{align*}
&c_i\le C-c(X_{<i})\implies\bound(i,C-c(X_{<i}),\state(X_{<i})))
\le\alpha+\lopt( X_{<i}\cup\{i\})-\lopt( X_{<i}).
\end{align*}
\begin{theorem}
\label{thm:approx}
Suppose we have an upper-level feasible set $\,\U$, state set $T$, and functions $\bound$, $\state$, and $\update$.
Assume that $\propiinx(\state,\update)$ is satisfied, and that there exists some $\alpha\ge0$ such that
$\propapprox(\bound,\state,\update,\alpha)$ is satisfied.
Then, for all $i\in E$ and $\hat X\in\U$,
\[f(i,\Phi(\hat X_{<i}))\le (m-i+1)\alpha+\max\{\lopt(X)-\lopt(X_{<i}):X\in\U,\,\Phi(X_{<i})=\Phi(\hat X_{<i})\}.\]
\end{theorem}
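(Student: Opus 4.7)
The plan is to unroll the dynamic program defining $f$ as a maximum over choices of interdicted elements, and then, for each such choice, construct an explicit witness $X\in\U$ that realizes both the DP state sequence and the claimed $\lopt$-gap. Write $(r,s)=\Phi(\hat X_{<i})$. By direct expansion of the Bellman recursion (which is a standard 0-1 knapsack DP with $\bound$ playing the role of a state-dependent profit),
\[
f(i,r,s)=\max_{S}\,\sum_{\ell=1}^{k}\bound(j_\ell,r_\ell,s_\ell),
\]
where the maximum is over all $S=\{j_1<\cdots<j_k\}\se\{i,\ldots,m\}$ with $c(S)\le r$, and the state sequence is generated by $(r_1,s_1)=(r,s)$ and $(r_{\ell+1},s_{\ell+1})=\Pi(j_\ell,(r_\ell,s_\ell))$.

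Now I would fix such an $S$ and set $X:=\hat X_{<i}\cup S$. Then $c(X)=(C-r)+c(S)\le C$, so $X\in\U$, and since $X_{<i}=\hat X_{<i}$ we have $\Phi(X_{<i})=(r,s)$. A short induction on $\ell$ using $\propiinx(\Phi,\Pi)$, together with the fact that $X$ has no elements strictly between $j_\ell$ and $j_{\ell+1}$, shows that $\Phi(X_{<j_\ell})=(r_\ell,s_\ell)$ for every $\ell$. Applying $\propapprox(\bound,\state,\update,\alpha)$ at each $j_\ell$ then yields
\[
\bound(j_\ell,r_\ell,s_\ell)\le\alpha+\lopt(X_{<j_\ell+1})-\lopt(X_{<j_\ell}).
\]
Summing over $\ell$, the $\lopt$-differences telescope to $k\alpha+\lopt(X_{<j_k+1})-\lopt(X_{<i})$, using $X_{<j_\ell+1}=X_{<j_{\ell+1}}$. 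The key observation at the end of the telescoping is that $X_{<j_k+1}=X$ itself: every element of $X$ is either in $\hat X_{<i}$ (index $<i\le j_k$) or in $S$ (index $\le j_k$). Since $k\le m-i+1$ and $X$ is feasible for the maximum on the right-hand side of the theorem, taking a maximum over $S$ yields the result.

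The main obstacle is giving a careful statement of the DP unrolling and of the induction matching $(r_\ell,s_\ell)$ to $\Phi(X_{<j_\ell})$; once these are in hand, the telescoping is routine and, because $X_{<j_k+1}=X$, no monotonicity of $\lopt$ on supersets is needed. A reverse induction on $i$ mirroring the proof of \cref{thm:ubCorrect} is tempting, but it is awkward because $\Phi(X_{<i})$ does not uniquely determine $X_{<i}$, so matching a ``witness'' interdiction across levels of the induction essentially requires the same explicit construction performed above.
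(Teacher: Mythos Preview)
Your proposal is correct and follows essentially the same route as the paper: unroll the DP, take a maximizing choice $S$, set $X=\hat X_{<i}\cup S$, use $\propiinx(\Phi,\Pi)$ to identify $(r_\ell,s_\ell)=\Phi(X_{<j_\ell})$, apply $\propapprox$ at each $j_\ell$, and telescope. The only cosmetic difference is that the paper formalizes the unrolling via an auxiliary function $f_X$ and proves $X\in\U$ by a tie-breaking contradiction, whereas you obtain $c(S)\le r$ (and hence $X\in\U$) directly from the knapsack structure of the recursion; both arguments are equivalent.
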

\begin{proof}
To prove the theorem, we define a new function $f_X$ in which the outcome of the $\max$ operation
in $f$ is decided by a special parameter $X$ (rather than actually choosing the maximum).
\begin{align*}
f_X(i,r,s)=\begin{cases}
0&\text{if }i>m\\
f_X(i+1,r,s)&\text{if }i\le m\text{ and }i\notin X\\
f_X(i+1,\Pi(i,r,s))+\bound(i,r,s)&\text{if }i\le m\text{ and }i\in X\\
\end{cases}
\end{align*}
Let $\hat X\in\U$ and $i\in E$.
By recursively expanding out the definition of $f$, we can see that 
\[f(i,\Phi(\hat X_{<i}))=\sum_{j=1}^k\bound(x_j,r_j,s_j)\] for some $k$ and some sequences
$x_1,\dots,x_k\in E$ and $(r_1,s_1),\dots,(r_k,s_k)\in\{0,\dots,C\}\times T$ with $i\le x_1<\dots<x_k$ and $(r_1,s_1)=\Phi(\hat X_{<i})$.
There may be multiple valid choices of these sequences.
We may assume that for all $j\in\{1,\dots,k\}$, \[f(x_j+1,r_j,s_j)\ne f(x_j+1,\Pi(x_j,r_j,s_j))+\bound(x_j,r_j,s_j)\]
because if not, there is an alternative choice for the sequences which exclude $x_j$ and $(r_j,s_j)$, respectively.
Let $X=\hat X_{<i}\cup \{x_1,\dots,x_k\}$. Then, by definition we have
that $f(x_j,r_j,s_j)=f_X(x_j,r_j,s_j)$ for all $j\in\{1,\dots,k\}$.
Intuitively, $X$ encodes the `decisions' made in $f(i,\Pi(\hat X_{<i}))$.

Note that since $(r_1,s_1)=\Phi(\hat X_{<i})=\Phi(X_{<i})$, by $\propiinx(\Phi,\Pi)$, $(r_2,s_2)=\Pi(x_1,\Phi(X_{<x_1}))=\Phi(X_{<x_2})$
and inductively for all $\ell$,
\begin{equation}
\label{eqn:stateEquiv}
(r_{\ell+1},s_{\ell+1})=\Pi(x_\ell,\Phi(X_{<x_\ell}))=\Phi(X_{<x_{\ell+1}}).
\end{equation}
We claim that $X\in\U$.
If not, then by \cref{eqn:stateEquiv} and the fact that $\hat X\in\U$, there is some $j$
such that $c_{x_j}>C-c(X_{<x_j})=r_j$.
However,
\begin{align*}
x_j\in X &\implies f_X(x_j,r_j,s_j)=f_X(x_j+1,\Pi(x_j,s_j))+\bound(x_j,r_j,s_j)\text{, and}\\
c_{x_j}>r_j&\implies f(x_j,r_j,s_j)=f(x_j+1,r_j,s_j).
\end{align*}
This contradicts $f(x_j,r_j,s_j)=f_X(x_j,r_j,s_j)$ given our earlier observation
that we may assume $f(x_j+1,r_j,s_j)\ne f(x_j+1,\Pi(x_j,r_j,s_j))+\bound(x_j,r_j,s_j)$.
Therefore, $X\in\U$.
We can now finish the proof.
\begin{align*}
f_X(i,\Phi(\hat X_{<i}))&=\sum_{j=1}^k\bound(x_j,r_j,s_j)\tag{by the definition of $f_X$}\\
	&=\sum_{j=1}^k\bound(x_j,\Phi(X_{<x_j}))\tag{by \cref{eqn:stateEquiv}}\\
	&\le k\alpha+\sum_{j=1}^k\lopt(X_{<x_j}\cup\{x_j\})-\lopt(X_{<x_j})\tag{by $\propapprox(\bound,\state,\update,\alpha)$}\\
	&=k\alpha+\sum_{j=1}^k\lopt(X_{<x_j+1})-\lopt(X_{<x_{j}})\tag{by the definition of $X$}\\
	&=k\alpha+\lopt(X_{<x_k+1})-\lopt(X_{<x_1})\tag{since the sum telescopes}\\
	&=k\alpha+\lopt(X)-\lopt(X_{<i})\tag{by the definition of $X$}\\
	&\le (m-i+1)\alpha+\max\{\lopt(X')-\lopt(X'_{<i}):X'\in\U,\,\Phi(\hat X_{<i})=\Phi(X'_{<i})\}
	\tag{since $X\in\U$, $\hat X_{<i}=X_{<i}$, and $k\le m-i+1$.}
\end{align*}
\end{proof}

We remark that while this guarantee may appear to be quite poor in the worst case
(losing an additive factor of $m$), the strength of this approach is that it computes
bounds for many subproblems, using only the amount of time required to compute
$\bound(i,r,s)$ for each subproblem. These subproblem
bounds can be precomputed and efficiently accessed by our branch-and-bound scheme.
Furthermore, for graphic matroids,
our upper bounds are typically very strong in practice, outperforming other methods such as the LP relaxation of the extended formulation \cite{wei2021integer,bazgan2012efficient}
or Lagrangian relaxation \cite{linhares2017improved}. For partition matroids,
our upper bound actually solves the interdiction problem exactly (i.e., we can take $\alpha=0$
in \cref{thm:approx}).

The techniques in this section can be generalized considerably beyond the given setting of matroid interdiction.
We already noted that $\L(X)$ does not necessarily need to be the set of bases of a matroid which exclude elements in $X$;
it suffices that an appropriate function $\delta$ can be computed.
However, $\U$ does not need to be $\{X\se E:c(X)\le C\}$ either; the only important quality that $\U$ should have
is that it should be possible to keep track of whether a set is feasible under additions of elements to the set
using a `small' amount of state (we want it to be `small' so that memory usage is reasonable).
In the above, we used the variable $r$ to do this. As an example generalization, if $\U=\{X\se E:c^1(X)\le C^1,c^2(X)\le C^2\}$,
then we could use variables $r_1\in\{0,\dots,C^1\}$ and $r_2\in\{0,\dots,C^2\}$ to keep track of feasibility.
The proofs require only trivial modifications.

\subsection{Uniform matroids}
\label{sec:uniform}

We start with perhaps the easiest nontrivial case: uniform matroids.
Recall that the independent sets of a uniform matroid over $E$
are defined by $\{S\se E:|S|\le k\}$ for some given $k$.
Intuitively, to determine the replacement
element when an element $e$ is interdicted in a uniform matroid, we only
need to know the number of elements before $e$ which were interdicted,
because independence of a set $S$ depends only on $|S|$.
The rest of this section is spent formalizing this intuition
and developing an exact algorithm for uniform matroid interdiction.

We now define a set $T$ and the functions $\bound$, $\update$, and $\state$, as discussed
in the previous subsection.
Let $T=\{0,1,\dots,m\}$.
We define $\update:E\times T\to T$ and $\state:\U\to T$ as follows:
\begin{align*}
\update(i,n)&=n+1,\\
\state(X)&=|X|.
\end{align*}
Here, the state $T$ keeps track of the number of interdicted elements,
in line with our intuition.
It is trivial to see that $\propiinx(\state,\update)$ is satisfied.
The proof of the following theorem leads to our definition of $\bound$
and the proof of $\propbound(\bound,\state,\update)$.
\begin{theorem}
\label{thm:uniformBound}
Suppose that $M$ is uniform.
Then there exists a function $\bound(i,r,n)$ satisfying
$\propapprox(\bound,\state,\update,0)$ with equality, i.e., for all $X\in\U$, $r\in\{0,\dots,C\}$ and $i\in E$,
\begin{align*}
c_i\le C-c(X_{<i})\implies\bound(i,C-c(X_{<i}),\state(X_{<i})))
=F(X_{<i}\cup\{i\})-F(X_{<i})).
\end{align*}
\end{theorem}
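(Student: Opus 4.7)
The plan is to exploit the particularly simple structure of lex-smallest bases in a uniform matroid. Since a uniform matroid with $0 < k < m$ is connected, assumption \labelcref{ass:order} forces $w_1 \le w_2 \le \dots \le w_m$ globally, and by the greedy algorithm $\bmin(X_{<i})$ is exactly the set of the $k$ smallest-indexed elements of $E \sm X_{<i}$.

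First I would characterize $\bmin(X_{<i})$ explicitly in terms of $n = |X_{<i}|$. Because $X_{<i} \se \{1,\dots,i-1\}$, there are exactly $i-1-n$ available elements in $\{1,\dots,i-1\}$ while all of $\{i,\dots,m\}$ is untouched. Hence $\bmin(X_{<i})$ consists of $\{1,\dots,i-1\} \sm X_{<i}$ together with, whenever $k > i-1-n$ (equivalently $i \le k+n$), the contiguous block $\{i, i+1, \dots, k+n\}$. In particular, $i \in \bmin(X_{<i})$ if and only if $i \le k+n$.

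Second I would compute $F(X_{<i} \cup \{i\}) - F(X_{<i})$ from this description. If $i > k+n$, then $i$ is not in the basis and interdicting it changes nothing, so the difference is $0$. If $i \le k+n$, then interdicting $i$ removes it and the greedy replacement is the next available index, $k+n+1$; by \labelcref{ass:order}, $w_{k+n+1} \ge w_i$, so the difference equals $w_{k+n+1} - w_i$ (assuming $k+n+1 \le m$; otherwise the lower level becomes infeasible after interdiction and the difference is $+\infty$).

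The crucial observation is that this difference depends on $X_{<i}$ only through $|X_{<i}|$, which is precisely $\state(X_{<i})$. Setting
\[
\bound(i,r,n) = \begin{cases} w_{k+n+1} - w_i & \text{if } i \le k+n \text{ and } k+n+1 \le m, \\ +\infty & \text{if } i \le k+n \text{ and } k+n+1 > m, \\ 0 & \text{if } i > k+n, \end{cases}
\]
therefore gives $\propapprox(\bound,\state,\update,0)$ with equality. The only real obstacle is careful index bookkeeping to verify the identity of the replacement element; the boundary case where the lower level becomes infeasible is a matter of convention rather than a genuine difficulty, and can also be avoided entirely by restricting attention to $X \in \U$ whose interdiction preserves a basis.
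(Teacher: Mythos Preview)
Your proposal is correct and follows essentially the same approach as the paper: both arguments identify $\bmin(X_{<i})$ as the $k$ smallest-indexed elements of $E\sm X_{<i}$, observe that $i$ lies in this basis precisely when $i\le k+n$, and conclude that the replacement element (when it exists) is $k+n+1$, yielding the same three-case definition of $\bound$. Your explicit description of $\bmin(X_{<i})$ as $(\{1,\dots,i-1\}\sm X_{<i})\cup\{i,\dots,k+n\}$ is slightly more direct than the paper's presentation, but the content is identical.
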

\begin{proof}
Consider some $X\se\{1,\dots,i-1\}$ such that $X\cup\{i\}\in\U$ and $\state(X)=n$.
We want to determine $F(X\cup\{i\})-F(X)$.
Let $B=\bmin(X)$ be the lexicographically smallest basis of $E\sm X$.
As discussed in \cref{sec:intro}, the lexicographically smallest basis is always of minimum weight,
given our item ordering assumption.

If $i\notin B$, then $F(X\cup\{i\})=F(X)$ because
$B$ is also the lexicographically smallest basis of $E\sm(X\cup\{i\})$.
Otherwise, if $i\in B$, then there are two cases.
If there is no $j>i$ with $j\notin B$, then 
$F(X\cup\{i\})-F(X)=\infty$ because there is no
element that can replace element $i$ in $B$.
Otherwise, if such a $j$ exists, then the minimum such element $j$ replaces element $i$,
so $F(X\cup\{i\})-F(X)=w_j-w_i$.

We claim that these conditions can be tested for only by knowing
$i$, $k$ and $n$ (recall $n=|X|$).
Observe that $B$, the lexicographically smallest basis of $E\sm X$, is defined by
$B=\{j\in E:j\notin X,\,j\le k+n\}$ because uniform matroids are connected so, by \labelcref{ass:order},
the elements are sorted by weight.
Let $j=k+n+1$.
There are three cases.
\begin{enumerate}
\item If $j\le i$, then $i\notin B$ because $i\ge j=k+n+1$. So, we take $\bound(i,r,n)=0$.
\item If $i<j\le |E|$, then $i\in B$ because $i\notin X$ and $i<j=k+n+1$.
		We have $j\in E\sm B$ by definition, and for any $i<j'<j$ 
		we know $j'\in B$. So $B\sm\{i\}\cup\{j\}$
		is the lexicographically smallest basis of $E\sm (X\cup\{i\})$. Hence, we take
		$\bound(i,r,n)=w_j-w_i$.
\item If $j>|E|$, then again $i\in B$, but there is no $j'>i$ with $j'\in E\sm B$.
	So, there is no element that can replace $i$, and hence we take $\bound(i,r,n)=\infty$.
	\qedhere
\end{enumerate}
To summarize, $\bound(i,r,n)$ is defined as follows. Note that it does not actually depend on $r$.
\begin{align*}
\bound(i,r,n)=\begin{cases}
0&\text{if }k+n< i,\\
w_{k+n+1}-w_i&\text{if }i\le k+n<|E|,\\
\infty&\text{if }|E|\le k+n.\\
\end{cases}
\end{align*}
\end{proof}
Property $\propbound(\bound,\state,\update)$ follows easily from this. Let $i\in E$ and $s\in T$.
If there exists some $X\in\U$ such that $C-c(X_{<i})=r$ and $\state(X_{<i})=s$, then by \cref{thm:uniformBound}
we have
\begin{align*}
c_i\le r\implies\bound(i,r,s)&=F(X_{<i}\cup\{i\})-F(X_{<i})\\
	&=\max\begin{Bmatrix*}[l]
	F(X'_{<i}\cup\{i\})-F(X'_{<i})\\
	\text{s.t. }X'\in\U,\,C-c(X_{<i})=r,\,\state(X'_{<i})=s
\end{Bmatrix*}.\end{align*}
Otherwise, if no such $X$ exists, then the $\max$ in $\propbound(\bound,\state,\update)$ is taken
over the empty set, so the condition is vacuously satisfied.

Since uniform matroids are so simple, we can further show that with $\bound$
defined as above, $F(\emptyset)+f(1,C,\state(\emptyset))$ is not just an
upper bound, but actually solves the matroid interdiction problem exactly.
\begin{corollary}
	For uniform $M$, we have 
	\begin{align*}
	F(\emptyset)+f(1,\state(\emptyset))=\max\{F(X):X\in\U\}=\OPT.
	\end{align*}
\end{corollary}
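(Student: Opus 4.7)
The plan is to sandwich $f(1,C,\state(\emptyset))$ between the general upper bound from \cref{thm:ubCorrect} and the approximation upper bound from \cref{thm:approx} with $\alpha=0$. Since the two theorems give matching inequalities in opposite directions, equality follows immediately.

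First I would verify that the three hypotheses needed to invoke the two theorems are in place. \emph{Condition} $\propiinx(\state,\update)$ holds trivially: for any $X$ with $X\cup\{i\}\in\U$ we have $\state(X_{<i}\cup\{i\})=|X_{<i}|+1=\update(i,\state(X_{<i}))$. \emph{Condition} $\propbound(\bound,\state,\update)$ was argued in the paragraph following the proof of \cref{thm:uniformBound}. \emph{Condition} $\propapprox(\bound,\state,\update,0)$ is precisely the equality statement of \cref{thm:uniformBound}.

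Now I would apply \cref{thm:ubCorrect} with $i=1$ and $\hat X=\emptyset$. Since $X_{<1}=\emptyset$ for every $X\se E$, the right-hand side reduces to $\max\{F(X)-F(\emptyset):X\in\U\}=\OPT-F(\emptyset)$, giving
\[
f(1,C,\state(\emptyset))\ge \OPT-F(\emptyset).
\]
Next I would apply \cref{thm:approx} with the same $i=1$, $\hat X=\emptyset$, and $\alpha=0$. The additive term $(m-i+1)\alpha$ vanishes, and again $X_{<1}=\emptyset$ for every $X\in\U$, so
\[
f(1,C,\state(\emptyset))\le \OPT-F(\emptyset).
\]
Adding $F(\emptyset)$ throughout and combining the two bounds yields $F(\emptyset)+f(1,C,\state(\emptyset))=\OPT$.

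There is essentially no hard step here: the real content has already been absorbed into \cref{thm:uniformBound}, where the closed-form expression for $\bound$ was shown to match the true discrete derivative $F(X_{<i}\cup\{i\})-F(X_{<i})$ exactly (not just as an upper bound). This is what lets us take $\alpha=0$ in \cref{thm:approx}, and that is what collapses the two-sided inequality to equality. The only thing to be slightly careful about is the degenerate cases hidden in the definition of $\bound$ (e.g.\ the $\infty$ case when no replacement element exists), but these are handled uniformly by the convention that $F(X)=\infty$ when the follower's problem is infeasible, so they cause no issue.
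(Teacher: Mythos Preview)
Your proposal is correct and is exactly the approach the paper takes: the paper simply says the corollary follows immediately from \cref{thm:uniformBound,thm:ubCorrect,thm:approx}, and you have spelled out that immediate deduction by sandwiching $f(1,C,\state(\emptyset))$ between the two inequalities with $\alpha=0$.
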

The proof follows immediately from \cref{thm:uniformBound,thm:ubCorrect,thm:approx}.
The following is then also immediate from the observation that $\bound$
can be computed in $O(1)$ time and $|T|=O(m)$, so the DP 
table for $f$ has size $O(m^2 C)$.
\begin{corollary}
	Uniform matroid interdiction can be solved in time $O(m^2C)$.
\end{corollary}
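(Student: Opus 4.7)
The plan is to bound the running time by analyzing the size of the DP table and the per-cell work, then invoke the previous corollary to conclude correctness.

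First I would observe that the DP table $f(i,r,s)$ is indexed by $i\in\{1,\dots,m+1\}$, $r\in\{0,\dots,C\}$, and $s\in T=\{0,1,\dots,m\}$, giving $O(m^2 C)$ table entries. For each entry, the recursion performs a constant number of lookups into $f$ and one evaluation of $\bound(i,r,n)$. From the closed form derived in the proof of \cref{thm:uniformBound},
\[
\bound(i,r,n)=\begin{cases}0&\text{if }k+n<i,\\ w_{k+n+1}-w_i&\text{if }i\le k+n<|E|,\\ \infty&\text{if }|E|\le k+n,\end{cases}
\]
each evaluation is $O(1)$ given direct access to the weight vector. Therefore filling the table takes $O(m^2 C)$ time overall, assuming a standard bottom-up evaluation in decreasing order of $i$.

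Next I would account for the remaining work. The value $F(\emptyset)$, the minimum weight of a basis of $M$, equals $\sum_{j=1}^k w_j$ by \labelcref{ass:order}, and hence is computable in $O(m)$ time. The state $\state(\emptyset)=0$ is obtained trivially. By the preceding corollary, $\OPT=F(\emptyset)+f(1,C,\state(\emptyset))$, so a single table lookup after the DP completes returns the optimal value. Adding everything up yields a total running time of $O(m^2 C)+O(m)=O(m^2 C)$.

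I do not anticipate any real obstacle here: the work is essentially bookkeeping once \cref{thm:uniformBound} and the prior corollary are in hand. The only subtlety worth mentioning is that the $O(1)$-per-cell evaluation relies on the explicit formula for $\bound$ and on the assumption \labelcref{ass:order} which places elements in nondecreasing weight order within the (single) component of the uniform matroid; this is what allows $w_{k+n+1}$ to be read off directly rather than recomputed by a greedy call.
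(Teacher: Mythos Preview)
Your proposal is correct and matches the paper's own argument essentially verbatim: the paper simply observes that $\bound$ is computable in $O(1)$ time and $|T|=O(m)$, so the DP table for $f$ has $O(m^2C)$ entries. Your additional remarks about computing $F(\emptyset)$ and the bottom-up evaluation order are fine elaborations but not needed beyond what the paper says.
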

Finally, to complement this, we prove that the decision problem version
of uniform matroid interdiction is NP-complete, and hence no better
running time than pseudopolynomial is possible unless $\operatorname{P}=\operatorname{NP}$.
We define this decision problem as follows: given an instance of uniform matroid
interdiction and an integer $t$, determine if there is a feasible solution
to the uniform matroid interdiction problem of value at least $t$.
\begin{theorem}
The uniform matroid interdiction decision problem is weakly NP-complete.
\end{theorem}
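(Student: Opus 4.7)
The plan is to prove NP-hardness by a polynomial-time reduction from the 0-1 knapsack decision problem, which is classically weakly NP-complete. Membership in NP was already noted in \cref{sec:problemstatement} (feasibility of $(X,Y)$ is certifiable by checking $c(X)\le C$ and that $Y$ equals the greedy minimum basis on $E\sm X$), and the corollary immediately preceding this theorem provides a pseudopolynomial $O(m^2 C)$ algorithm; together with NP-hardness, these give weak NP-completeness.

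Given a knapsack instance with items $i\in\{1,\dots,n\}$ having nonnegative integer profits $\hat p_i$ and positive integer weights $\hat a_i$, knapsack capacity $\hat W$, and profit threshold $\hat T$, the construction first relabels items so that $\hat p_1\ge\hat p_2\ge\dots\ge\hat p_n$ and sets $M:=\max_i\hat p_i$. The constructed uniform matroid interdiction instance has ground set $E=\{1,\dots,2n\}$, rank $k=n$, budget $C=\hat W$, weights $w_i=M-\hat p_i$ for $i\le n$ and $w_{n+j}=M$ for $j\in\{1,\dots,n\}$, and costs $c_i=\hat a_i$ for $i\le n$ and $c_{n+j}=\hat W+1$ for $j\in\{1,\dots,n\}$. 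Since the $\hat p_i$'s are non-increasing and each $\hat p_i\le M$, the $w_i$'s are non-decreasing, so \labelcref{ass:order} holds.

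The key identity to verify is $\OPT=nM-\hat P+\OPT_{\mathrm{knap}}$, where $\hat P:=\sum_i\hat p_i$ and $\OPT_{\mathrm{knap}}$ is the optimal knapsack profit. Because $c_{n+j}=\hat W+1>C$, every feasible $X\in\U$ satisfies $X\se\{1,\dots,n\}$, and the cost constraint reduces to $\hat a(X)\le\hat W$. For such $X$, every surviving first-half element has weight $M-\hat p_i\le M$, so a minimum basis of $E\sm X$ consists of all $n-|X|$ surviving first-half elements together with any $|X|$ second-half elements, and its weight simplifies to $(n-|X|)M-(\hat P-\hat p(X))+|X|M=nM-\hat P+\hat p(X)$. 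Maximizing over feasible $X$ yields the identity, so the knapsack instance has a solution of value at least $\hat T$ iff $\OPT\ge nM-\hat P+\hat T$.

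The main engineering task is to make the lower-level basis weight depend on $X$ only through $\hat p(X)$, not also through $|X|$ or through the identity of the second-half substitutes. Fixing every second-half weight to the common value $M$ achieves exactly this cancellation: each additional interdiction of item $i$ removes a first-half contribution $M-\hat p_i$ and adds a replacement $M$, netting $+\hat p_i$, which is precisely the knapsack profit of including $i$. All numerical quantities in the construction are polynomial in the magnitudes of the input, so this is a valid polynomial-time reduction from a weakly NP-complete problem, giving NP-hardness and hence weak NP-completeness.
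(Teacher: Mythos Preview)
Your proof is correct and follows essentially the same construction as the paper: a uniform matroid of rank $n$ on $2n$ elements, with the first $n$ elements carrying weights $p_{\max}-p_i$ and costs $a_i$, and the last $n$ elements uninterdictable with weight $p_{\max}$. You supply somewhat more detail than the paper (the explicit identity $\OPT=nM-\hat P+\OPT_{\mathrm{knap}}$ and the sorting step to satisfy \labelcref{ass:order}), but the reduction is the same.
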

\begin{proof}
The problem is easily seen to be in NP, because determining feasibility requires
only checking whether the knapsack constraint is satisfied and determining
the minimum basis weight in a uniform matroid.

Given a knapsack instance with $n$ items, costs $a$, budget $C$ and profits $p$,
we define a uniform matroid $M=(E,\I)$ where $E=\{1,2,\dots,2n\}$, $\I=\{S\se E:|S|\le n\}$, and we define
weights and interdiction costs as follows.
Let $p_{\max}$ be the largest item profit from $p$.
For $e\in\{1,\dots,n\}$, set $w_e=p_{\max}-p_e$ and $c_e=a_e$.
For $e\in\{n+1,\dots,2n\}$ set $w_e=p_{\max}$ and $c_e=C+1$.
The uniform matroid interdiction instance has the same capacity $C$ as the knapsack.
It is easily seen that interdicting any element $e\in\{1,\dots,n\}$ costs $a_e$
and causes the minimum basis weight to increase by $p_e$, because the replacement
element is always after element $n$. No elements after $n$ can be interdicted,
because they have interdiction cost greater than $C$. This is exactly the same change to
profit and cost which are seen when adding $e$ to the knapsack. So, a set $X$ is an optimal
solution for the knapsack problem if and only if $X$ is an optimal solution
to the corresponding uniform matroid interdiction problem.
\end{proof}

In \cref{sec:directsum}, we extend these results to partition matroids, which are the direct
sum of uniform matroids.

\subsection{Direct sums of matroids}
\label{sec:directsum}

In this section, we show how to derive an upper bound for the interdiction problem
on a matroid $M=(E,\I)$ which is the direct sum of two matroids (or more, by induction)
for which we have upper bounds.
Furthermore, any additive approximation bound for the matroids being summed is preserved.
The most immediate consequence of this result is a pseudopolynomial time algorithm
for solving the partition matroid interdiction problem.
This result is fairly easy to intuitively understand, but a rigorous proof ends up being
quite technical due to the desire to keep the number of DP states small.

Let $m=|E|$, and let $M_1=(E,\I_1)$ and $M_2=(E,\I_2)$ be matroids.
We assume without loss of generality that there exists
some $p$ such that if $J\in\I_1$ then $J\se\{1,\dots,p\}$
and if $J\in\I_2$ then $J\se\{p+1,\dots,m\}$, i.e., that the matroids are disjoint.
We assume that $M$ is represented as follows:
\begin{align*}
	E&=\{1,\dots,m\},\\
	\I&=\{J:J\cap\{1,\dots,p\}\in\I_1\text{ and }J\cap\{p+1,\dots,m\}\in\I_2\}.
\end{align*}
We assume that the subsets $\{1,\dots,p\}$ and $\{p+1,\dots,m\}$ of $E$
are individually sorted as necessary for the constructions on the individual matroids.
For example, if $M_1$ and $M_2$ are both uniform or graphic, then we would have
$w_1\le w_2\le\dots\le w_{p}$ and $w_{p+1}\le w_{p+2}\le\dots\le w_{m}$.
Notice that this satisfies our element ordering assumption \labelcref{ass:order} since
there are no two elements $i\in\{1,\dots,p\}$ and $j\in\{p+1,\dots,m\}$
which are in the same component of $M$.
Let $F_M(X)$ be the minimum $w$-weight of a basis $B$ of $E\sm X$ in $M$, or $\infty$
if no such basis exists.

Suppose that $T_i$, $\state_i$, $\update_i$ and $\bound_i$, correspond to
$M_i$ for $i\in\{1,2\}$ and that they satisfy $\propbound(\bound_i,\state_i,\update_i)$
and $\propiinx(\state_i,\update_i)$.
In this section we make the additional assumption that for all $S\se\{1,\dots,p\}$,
\begin{align}
&\state_2(S)=\state_2(\emptyset).\label{eqn:dsassume}
\end{align}
This is a natural assumption to make because elements
$1$ through $p$ are not contained in any independent set of $M_2$, and hence should be
irrelevant to $\state_2$. This assumption arises because we represent the matroids
as having the same ground set, which simplifies the main proof.
Let $T=T_1\sqcup T_2$ (the disjoint union of $T_1$ and $T_2$).
We define the functions $\state$, $\update$, and $\bound$ as follows.
\begin{align*}
\state(X)&=
	\begin{cases}
		\state_1(X)&\text{if }X\se\{1,\dots,p\}\\
		\state_2(X)&\text{otherwise}
	\end{cases}\\
\update(i,s)&=
	\begin{cases}
		\update_1(i,s)&\text{if }i\le p\\
		\update_2(i,\state_2(\emptyset))&\text{if }i>p\text{ and }s\in T_1\\
		\update_2(i,s)&\text{if }i>p\text{ and }s\in T_2
	\end{cases}\\
\bound(i,r,s)&=
	\begin{cases}
		\bound_1(i,r,s)&\text{if }i\le p\\
		\bound_2(i,r,\state_2(\emptyset))&\text{if }i>p\text{ and }s\in T_1\\
		\bound_2(i,r,s)&\text{if }i>p\text{ and }s\in T_2
	\end{cases}
\end{align*}
We remark that while these functions may appear to be asymmetrical, the desired properties hold
independent of which matroid is chosen as $M_1$ or $M_2$.
The asymmetry is necessary to handle the fact that in the recursive definition of $f$,
a state $s$ can only transition upon an item being interdicted. So, when we transition from
processing elements of $M_1$ to elements of $M_2$, we may still have a state corresponding
to $M_1$ if the boundary element is not interdicted, and hence we need to transition to a
state from $M_2$ at the earliest point that an element from $M_2$ is interdicted.
This construction can be simplified if we allow state transitions in $f$ when items are not interdicted, but it comes at the cost of heavier notation in the rest of the paper.
The following establishes that this works as intended.
\begin{theorem}
\label{lem:directsum}
If $T$, $\state$, $\update$, and $\bound$ are defined as above, then
$\propbound(\bound,\state,\update)$ and $\propiinx(\state,\update)$ are satisfied.
Furthermore, if there exists some $\alpha\ge0$ such that
$\propapprox(\bound_1,\state_1,\update_1,\alpha)$ and $\propapprox(\bound_2,\state_2,\update_2,\alpha)$ are satisfied, then
$\propapprox(\bound,\state,\update,\alpha)$ is satisfied.
\end{theorem}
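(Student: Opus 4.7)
The plan is to verify the three desired conditions by case analysis on where the current element $i$ lies relative to the boundary $p$, and (when $i>p$) on whether $X_{<i}$ has already crossed into $M_2$'s half of the ground set. The structural fact I will use throughout is the decomposition
\[F_M(X) = F_{M_1}(X\cap\{1,\dots,p\}) + F_{M_2}(X\cap\{p+1,\dots,m\}),\]
which holds because a basis of $M$ is the disjoint union of a basis of $M_1$ and a basis of $M_2$; equivalently, since $M_j$'s ground set is really $E$ with elements outside $\{1,\dots,p\}$ (resp.\ $\{p+1,\dots,m\}$) carrying no $M_j$-information, we also have $F_{M_2}(Y)=F_{M_2}(Y\cap\{p+1,\dots,m\})$ and symmetrically for $M_1$.

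For $\propiinx(\state,\update)$, the case $i\le p$ is immediate: both $X_{<i}$ and $X_{<i}\cup\{i\}$ lie inside $\{1,\dots,p\}$, so $\state$ and $\update$ reduce to $\state_1$ and $\update_1$ and we apply $\propiinx(\state_1,\update_1)$. When $i>p$, the target state $\state(X_{<i}\cup\{i\})$ is always $\state_2(X_{<i}\cup\{i\})$. If $X_{<i}\not\subseteq\{1,\dots,p\}$, then $\state(X_{<i})=\state_2(X_{<i})\in T_2$ and we apply $\propiinx(\state_2,\update_2)$ directly. In the boundary-crossing case $X_{<i}\subseteq\{1,\dots,p\}$, the assumption \labelcref{eqn:dsassume} gives $\state_2(X_{<i})=\state_2(\emptyset)$, so $\state_2(X_{<i}\cup\{i\})=\update_2(i,\state_2(\emptyset))$, which matches exactly the $T_1$-branch of our $\update$.

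For $\propbound(\bound,\state,\update)$, fix $i$, $r$, $s$ and any $X\in\U$ with $c(X_{<i})=C-r$ and $\state(X_{<i})=s$. When $i\le p$, the decomposition kills the $M_2$ contribution and gives $F(X_{<i}\cup\{i\})-F(X_{<i})=F_{M_1}(X_{<i}\cup\{i\})-F_{M_1}(X_{<i})$, which is bounded by $\bound_1(i,r,s)=\bound(i,r,s)$ via $\propbound(\bound_1,\state_1,\update_1)$ using the witness $X$ itself. When $i>p$, the decomposition kills the $M_1$ contribution and gives $F(X_{<i}\cup\{i\})-F(X_{<i})=F_{M_2}(X_{<i}\cup\{i\})-F_{M_2}(X_{<i})$. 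If $X_{<i}\not\subseteq\{1,\dots,p\}$, then $s\in T_2$, $\bound(i,r,s)=\bound_2(i,r,s)$, and $\propbound(\bound_2,\state_2,\update_2)$ with witness $X$ yields the bound. In the boundary-crossing subcase $X_{<i}\subseteq\{1,\dots,p\}$, assumption \labelcref{eqn:dsassume} gives $\state_2(X_{<i})=\state_2(\emptyset)$, so applying $\propbound(\bound_2,\state_2,\update_2)$ with state $\state_2(\emptyset)$ and witness $X$ yields $\bound(i,r,s)=\bound_2(i,r,\state_2(\emptyset))\ge F_{M_2}(X_{<i}\cup\{i\})-F_{M_2}(X_{<i})$, as required.

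The proof of $\propapprox(\bound,\state,\update,\alpha)$ follows the same three-case template, invoking $\propapprox(\bound_j,\state_j,\update_j,\alpha)$ on the relevant matroid and using the same decomposition identities to rewrite $F(X_{<i}\cup\{i\})-F(X_{<i})$ as the single-matroid discrete derivative. The only conceptual obstacle is the boundary-crossing subcase, where the chain of equalities must pass from an $M_1$-world to an $M_2$-world; assumption \labelcref{eqn:dsassume} was introduced exactly so that this transition costs nothing, since any $X_{<i}\subseteq\{1,\dots,p\}$ looks to $M_2$ indistinguishable from $\emptyset$. All remaining bookkeeping — cost/capacity accounting and the interchange between $F$, $F_{M_1}$, and $F_{M_2}$ — is routine.
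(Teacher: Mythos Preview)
Your proposal is correct and follows essentially the same approach as the paper's proof: both establish the decomposition $F_M(X')=F_{M_1}(X')+F_{M_2}(X')$ up front and then verify each of the three conditions by the same three-way case split (on $i\le p$; on $i>p$ with the state already in $T_2$; and on the boundary-crossing case $i>p$ with state still in $T_1$), invoking \labelcref{eqn:dsassume} precisely where you do. The only cosmetic difference is that the paper cases on whether $s\in T_1$ or $s\in T_2$ for $\propbound$, whereas you case on whether the witness $X_{<i}$ lies in $\{1,\dots,p\}$---but since $\state(X_{<i})=s$ forces these to coincide, the arguments are equivalent.
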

The proof is technical and not very interesting, so we defer it to \cref{apx:directsum}.
Since partition matroids are the direct sum of uniform matroids,
we can then establish the following.
\begin{corollary}
	\label{cor:solvepartition}
	The partition matroid interdiction problem can be solved in time $O(km^2C)=O(m^3C)$,
	where $k$ is the number of disjoint uniform matroids that need to be summed to
	construct the partition matroid.
\end{corollary}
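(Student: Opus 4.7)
The plan is to realize the partition matroid $M$ as an iterated direct sum of uniform matroids (one per part) and chain together the results of \cref{sec:uniform,sec:directsum}. Given parts $E_1, \dots, E_k$ with capacities $k_1, \dots, k_k$, we write $M = M_1 \oplus M_2 \oplus \dots \oplus M_k$ where $M_i = (E_i, \{S \subseteq E_i : |S| \le k_i\})$. By \cref{thm:uniformBound}, for each $M_i$ the construction of \cref{sec:uniform} yields a triple $(\bound_i, \state_i, \update_i)$ satisfying $\propbound$, $\propiinx$, and $\propapprox$ with $\alpha = 0$; I would take $\state_i(X) = |X \cap E_i|$, which agrees with the standalone uniform construction and, crucially, makes the hypothesis \cref{eqn:dsassume} hold trivially at every step of the iteration, since for $S$ contained in any prefix $E_1 \cup \dots \cup E_j$ disjoint from $E_{j+1}$, we have $\state_{j+1}(S) = 0 = \state_{j+1}(\emptyset)$.

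Next I would apply \cref{lem:directsum} iteratively $k-1$ times to combine these triples into a single $(\bound, \state, \update)$ for $M$. By the conclusion of \cref{lem:directsum}, each application propagates $\propbound$, $\propiinx$, and the additive-approximation property $\propapprox(\cdot,\cdot,\cdot,0)$, so after all $k-1$ applications these three properties still hold with $\alpha = 0$. Combining \cref{thm:ubCorrect} (with $i = 1$, $\hat X = \emptyset$) and \cref{thm:approx} (with $\alpha = 0$), both sides sandwich to give $f(1, C, \state(\emptyset)) = \OPT - F(\emptyset)$, so solving this DP exactly determines $\OPT$.

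For the running time, the DP table has $O(m \cdot C \cdot |T|)$ entries, and after iterated disjoint union $|T| = \sum_{i=1}^k |T_i| = O(km)$ using the loose bound $|T_i| \le m+1$ per uniform component. Each transition evaluates $\update$ and $\bound$ in $O(1)$ time given light preprocessing (a lookup from each element to its component, plus the replacement-element tables within each uniform matroid, both of which are immediate from the closed form for $\bound_i$ given in the proof of \cref{thm:uniformBound}). This yields $O(km^2 C)$ overall, and $O(m^3 C)$ since $k \le m$. The main obstacle is not really substantial --- the heavy lifting is absorbed into \cref{lem:directsum}; the only detail requiring care is choosing each $\state_i$ so that \cref{eqn:dsassume} holds throughout the iteration, which the natural count-within-component definition handles.
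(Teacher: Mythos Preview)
Your proposal is correct and follows essentially the same route as the paper: iterate the direct-sum construction of \cref{lem:directsum} over the $k$ uniform parts, count the resulting state set as $|T_1\sqcup\dots\sqcup T_k|=O(km)$, and multiply by the $m\cdot C$ index space of $f$ to get $O(km^2C)$. If anything you are more explicit than the paper in two useful places---checking that the modified choice $\state_i(X)=|X\cap E_i|$ makes \cref{eqn:dsassume} hold at every stage of the iteration, and invoking \cref{thm:ubCorrect,thm:approx} with $\alpha=0$ to pin down that the DP value equals $\OPT-F(\emptyset)$ rather than merely bounding it---whereas the paper leaves these to the reader.
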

\begin{proof}
Applying our method to construct a state set $T$ for a partition matroid
yields $T=\{0,\dots,C\}\times(T_1\sqcup T_2\sqcup\dots\sqcup T_k)$ where
the $T_i$ correspond to uniform matroids.
Any element in the set $T_1\sqcup\dots\sqcup T_k$ can be addressed
by using a number in the set $\{1,\dots,k\}$ to indicate which $T_j$ the element is
in, along with a number from $\{1,\dots,m\}$ to represent the element itself in $T_j$.
So, considering the $i$ parameter of $f$, which
is in $\{1,\dots,m\}$, and the $r$ parameter, which is in $\{0,\dots,C\}$,
the total number of DP states of $f$ is $O(km^2C)$,
and clearly each can be computed in constant time.
\end{proof}

\subsection{Graphic matroids}

In this section, we describe our upper bound for the case when $M$ is graphic.
This case is also known as the MST interdiction problem
(e.g., see \cite{wei2021integer,linhares2017improved}),
or the most vital edges in the MST problem (e.g., see \cite{lin1993most}).

We follow again the framework laid out in \cref{sec:ub}, wherein we must define
a set $T$ and functions $\bound$, $\update$, and $\state$.
For now we define $T=\{\epsilon\}$ (i.e., an arbitrary 1-element set)
as no additional state is needed to get a good
bound; later we will show how to extend $T$ to get stronger bounds.
So, we can simply define $\update(i,\epsilon)=\epsilon$ and $\state(X)=\epsilon$. It is trivial
to verify the desired conditions hold.
To simplify notation, we simply write $\bound(i,r)$ to mean $\delta(i,r,\epsilon)$.

It remains to define a function $\bound$ satisfying $\propbound(\bound,\state,\update)$.
However, we will not actually use the condition that $\state(X_{<i})\ge r$ from the definition of $\propbound$,
so our function $\bound$ will actually satisfy the following stronger condition:
\begin{align}
\bound(i,r)\ge \max\left\{F(X_{<i}\cup\{i\})-F(X_{<i}):X\in\U,\,\state(X_{<i})\le r\right\}.
\label{eqn:gFnCond2}
\end{align}
\cref{alg:computeGFnImproved} computes the value of $\bound(i,r)$ for all values
of $i$ and $r$.
In the algorithm, \MinCut{$E,c,u,v$} computes the minimum $u$-$v$ cut
in the subgraph of $G$ with edges $E$ and edge weights $c$.

\begin{algorithm}[t]
	Let $\bound(i,r)$ be defined as a table\;
	\For{$i=1,\dots,m$}{
		\lFor{$r=0,\dots,C$}{$\bound(i,r)\gets \infty$}
		\If{there exists a replacement element $k$ for $i$ in $M\sm\{1,\dots,i-1\}$}{
			\lFor{$r=0,\dots,C$}{$\bound(i,r)\gets w_k-w_i$}\label{algline:init}
		}
		$E'\gets\emptyset, c'\gets0$\;
		\For{$j=1,\dots,i-1$}{\label{algline:initloop}
			$E'\gets E'\cup\{j\}$\;
			$c'_j\gets c_j$\;
		}
		Let $u$ and $v$ be the endpoints of edge $i$\;
		$x\gets\MinCut{E',c',u,v}$\;
		\lFor{$r=\max\{0,C-x+1\},\dots,C$}{$\bound(i,r)\gets 0$}\label{algline:zerobound}
		\For{$j=i+1,\dots,m$}{\label{algline:innerloop}
			$E'\gets E'\cup\{j\}$\;
			$c'_j\gets \infty$\;
			$x\gets\MinCut{E',c',u,v}$\;
			\lFor{$r=\max\{0,C-x+1\},\dots,C$}{$\bound(i,r)\gets\min\{\bound(i,r),w_j-w_i\}$}\label{algline:setbound}
			\lIf{$x>C$}{\Break}
		}
	}
    \caption{Algorithm to compute $\bound(i,r)$ for all $i$ and $r$, when $M$ is graphic.}
	\label{alg:computeGFnImproved}
\end{algorithm}

\begin{theorem}
The values of $\bound(i,r)$ after running \cref{alg:computeGFnImproved} satisfy \cref{eqn:gFnCond2}.
\end{theorem}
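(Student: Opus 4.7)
The plan is to verify \cref{eqn:gFnCond2} by analyzing each of the three places where \cref{alg:computeGFnImproved} writes to $\bound(i,r)$. The central ingredient I would establish first is a Kruskal-based characterization of the replacement element for $i$ after interdicting $S\se\{1,\dots,i-1\}$ with $i\notin S$: letting $u,v$ denote the endpoints of $i$, the edge $i$ lies in $\bmin_M(S)$ iff $S$ is a $u$-$v$ cut in $(V,\{1,\dots,i-1\})$, and in that case the replacement for $i$ is the smallest index $k$ such that $u$ and $v$ are connected in $(V,\{1,\dots,k\}\sm(S\cup\{i\}))$. Both statements follow by tracing Kruskal's greedy construction of $\bmin_M(S)$ in weight order: $i$ is added precisely when $u,v$ have not yet merged, and after skipping $i$ the next edge whose processing reconnects them is exactly what replaces $i$.

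With this lemma, each write is easy to justify. \cref{algline:init} sets $\bound(i,r)$ to $w_{k^\ast}-w_i$ where $k^\ast$ is the replacement for $i$ in $M\sm\{1,\dots,i-1\}$, i.e., under the maximal interdiction of all earlier elements. For any $S\se\{1,\dots,i-1\}$, the set $\{1,\dots,k^\ast\}\sm(S\cup\{i\})$ contains $\{i+1,\dots,k^\ast\}$, which already connects $u$ to $v$; so whenever $i$ is replaceable in $\bmin_M(S)$, the lemma gives $k(S)\le k^\ast$, and then \labelcref{ass:order} inside the matroid component of $i$ yields $w_{k(S)}\le w_{k^\ast}$, making the initialization a valid upper bound (the fallback value $\infty$ handles the case where no $k^\ast$ exists). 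For \cref{algline:zerobound}, letting $x_0$ be the min $u$-$v$ cut in $(V,\{1,\dots,i-1\})$, any $X_{<i}$ with $c(X_{<i})\le C-r<x_0$ is not a $u$-$v$ cut, so by the lemma $i\notin\bmin_M(X_{<i})$ and $F(X_{<i}\cup\{i\})-F(X_{<i})=0$.

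For the inner loop (\cref{algline:setbound}), after adding $j$ with cost $\infty$ to $E'$, any finite min cut uses only edges from $\{1,\dots,i-1\}$, so $x_j$ equals the cheapest $c$-cost $u$-$v$ cut in $(V,\{1,\dots,i-1\}\cup\{i+1,\dots,j\})$ using only edges with index below $i$. The condition $x_j>C-r$ then forces any feasible $X_{<i}$ to leave $u$ connected to $v$ in $(V,\{1,\dots,j\}\sm(X_{<i}\cup\{i\}))$; by the lemma the replacement satisfies $k(X_{<i})\le j$, and since $k(X_{<i})$ and $j$ both lie in the matroid component of $i$, \labelcref{ass:order} gives $w_{k(X_{<i})}\le w_j$, so the update by $w_j-w_i$ is valid. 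Combining the three writes, after termination $\bound(i,r)$ is the minimum of individually valid upper bounds and hence itself valid, proving \cref{eqn:gFnCond2}.

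I expect the main obstacle to be the replacement-element characterization: the $u$- and $v$-components of $\bmin_M(S)\sm\{i\}$ can properly contain those of $(V,\{1,\dots,i-1\}\sm S)$ once Kruskal processes basis edges with index greater than $i$, so one must argue that the first edge reconnecting $u$ to $v$ in weight order is unaffected by which side such later merges absorb neighboring vertices into. Everything after this structural fact is a direct translation, with min-cut--max-flow duality connecting feasibility of $X_{<i}$ as a $u$-$v$ cut to the algorithm's numerical min-cut computation.
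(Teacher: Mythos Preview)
Your proposal is correct and follows essentially the same approach as the paper's proof: a case analysis over the three assignment locations, using min-cut values to certify whether $i$ lies in the lexicographically smallest spanning tree and to bound the index of its replacement. Your explicit replacement-element lemma is in fact a cleaner way to organize the argument than the paper's (which directly exhibits an edge $e\in\{i+1,\dots,j\}$ and asserts $T\setminus\{i\}\cup\{e\}$ is spanning, glossing over exactly the subtlety you flag as your ``main obstacle''); one small slip is that your claim that the loop index $j$ lies in the matroid component of $i$ is not justified---the paper simply uses the global ordering $w_1\le\dots\le w_m$ for graphic matroids (writing ``since $w_j\ge w_i$''), and you should do the same rather than invoking \labelcref{ass:order} componentwise.
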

\begin{proof}
Notice that each iteration $i$ of the outer loop can be computed independently of the
other iterations. So, fix $i$ arbitrarily.
If there exists a replacement element $k$ for $i$ in $M\sm\{1,\dots,i-1\}$, then on \cref{algline:init}, the algorithm sets $\bound(i,r)$ to $w_k-w_i$ for every $i$ and $r$,
This satisfies the desired inequality because it effectively assumes $X=E$, which does not necessarily
satisfy $X\in\U$ and $\phi(X_{<i})\le r$, but certainly upper bounds $F(X_{<i}\cup\{i\})-F(X_{<i})$.
In fact, this simple bound for $\bound(i,r)$ works with any matroid, not just graphic matroids.

Consider the case where the algorithm
sets $\bound(i,r)=0$ on \cref{algline:zerobound}.
Notice that we set $\bound(i,r)=0$ precisely when $C-r<\MinCut{E',c',u,v}$.
In order for $i$ to be an edge in the lexicographically smallest spanning tree,
we must cut $u$ and $v$ in $(V,E')$; otherwise,
an earlier edge could replace $i$ in the lexicographically smallest spanning tree. So, if $C-r<\MinCut{E',c',u,v}$,
then there is no $X$ for which $i$ is an edge in the lexicographically smallest spanning tree of $E\sm X$, and hence 
there will be no gain from interdicting $i$. That is,
$F(X\cup\{i\})-F(X)=0$ for all $X\se\{1,\dots,i-1\}$ with
$c(X)\le C-r$. Therefore, the condition is satisfied.

We now show that the assignment to $\bound(i,r)$ on \cref{algline:setbound} is correct.
Let $X\se\{1,\dots,1-i\}$ be such that $c(X)\le C-r$.
Consider the lexicographically smallest spanning tree $T$ of $G\sm X$. If edge $i$ is not in $T$, then
$F(X\cup\{i\})-F(X)=0$, and since $w_j\ge w_i$,
we set $\bound(i,r)\ge0$ and the condition is satisfied.

So, we may assume $i$ is in $T$.
Fix $r$, let $j$ be the earliest iteration 
where we set $\bound(i,r)=w_j-w_i$ on \cref{algline:setbound},
and let $E'$ and $c'$ be as they are when \cref{algline:setbound} executes on iteration $j$.
Since we set $\bound(i,r)=w_j-w_i$, there is no $u$-$v$ cut of cost at most $C-r$
in the graph $(V,E')$ with edge costs $c'$.
However, in order for us to have $i\in T$, $X$ must be a $u$-$v$ cut in $(V,E'\cap\{1,\dots,{i-1}\})$; otherwise
$i$ could be replaced in $T$ with another edge $e'$ from the cut with $e'<i$, yielding an lexicographically smaller spanning tree.
However, since $c(X)\le C-r$ and every $u$-$v$ cut in $(V,E')$ has cost greater than $C-r$,
in particular $X$ is not a $u$-$v$ cut in $(V,E')$.
So, some edge $e\in\{{i+1},\dots,j\}$ crosses the cut defined by $X$.
Therefore, adding $e$ to $T\sm\{i\}$ produces a spanning tree,
and $F(X\cup\{i\})-F(X)\le w_j-w_i$ as desired.
\end{proof}

\subsubsection{Worst-case performance}
\label{sec:mstworstcase}
We have now established how to compute some upper bounds for graphic matroid interdiction,
but the above algorithm and proof gives little intuition about how strong the bounds are.
In this section we define a family of instances for which $f(1,\state(\emptyset))=\Theta(m)\OPT$.
Although this result is negative---for example, the Lagrangian upper bound is known to be at most
$2\cdot \OPT$ \cite{linhares2017improved}---we argue in \cref{sec:comp} that
the poor worst-case performance is outweighed by other attractive features of this
bound in the context of a branch-and-bound scheme.

The intuition is as follows.
For each $k\ge1$ and $M\ge1$, we define a {\em comb} instance $\mathcal{I}(k,M)$.
Each comb instance can be divided into $k$ sections, called {\em teeth}, which act like independent subproblems,
in the following sense: the choice of which edges to include in the spanning tree
from any one of the $k$ sections has no impact on which edges can be added in any of the other sections.
In each tooth, there is a high-cost edge and a low-cost edge which must both
be interdicted to significantly increase the MST weight.
However, in the upper bound computation, after the leader pays for a single high-cost edge from one
of the teeth, it can then significantly increase the MST weight in every tooth by interdicting only the
low-cost edge, because there is no way to distinguish which tooth the interdicted high-cost edge is from.
The instances $\mathcal{I}(k,M)$ are formally described in \cref{fig:badGuy}.

\begin{figure}
	\centering
	\includegraphics[height=2in]{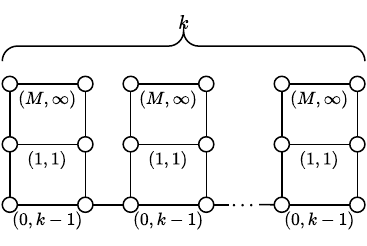}
	\caption{Comb instance with $k$ `teeth'.
	 The unlabelled edges have $w_e=0$ and $c_e=\infty$;
	 all other edges are labelled with $(w_e,c_e)$.
	 The capacity $C$ is $2k-1$.}
	\label{fig:badGuy}
\end{figure}

\begin{proposition}
The optimal solution to the MST interdiction problem on a comb instance $\mathcal{I}(k,M)$
has weight $M+1$.
\end{proposition}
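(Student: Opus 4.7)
The plan is to prove $\OPT = M+1$ by establishing both directions separately.

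For the lower bound $\OPT \ge M+1$, I will exhibit a specific feasible $X \in \U$ with $F(X) = M+1$. Based on the figure description, each tooth contains exactly two interdictable (labelled) edges---one low-cost and one high-cost---while the remainder of the comb consists of weight-$0$ uninterdictable edges together with a single heavy uninterdictable alternative in each tooth. Taking $X$ to be the pair of labelled edges of tooth~$1$, I will verify $c(X) \le C = 2k-1$ directly from the cost labels in the figure, and then observe that with both interdictable edges of tooth~$1$ removed, the MST in $G \setminus X$ is forced to use the heavy weight-$(M+1)$ uninterdictable alternative in that tooth, while every other tooth still admits a weight-$0$ spanning path. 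This yields $F(X) = M+1$.

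For the upper bound $\OPT \le M+1$, the key claim is that a tooth can contribute positively to the MST weight only when \emph{both} of its labelled edges have been interdicted; otherwise, the MST can route through the tooth's remaining weight-$0$ uninterdictable structure at zero cost. The problem then reduces to counting how many teeth can be ``fully interdicted'' under the budget. Reading off the cost labels from the figure, the combined interdiction cost of the two labelled edges in a single tooth exceeds $C/2 = (2k-1)/2$, so at most one tooth can have both of its labelled edges interdicted simultaneously. Any remaining budget spent on partially interdicting other teeth is wasted with respect to the MST weight. Hence every feasible $X \in \U$ satisfies $F(X) \le M+1$, and combined with the lower bound this gives $\OPT = M+1$.

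The main obstacle will be the bookkeeping needed to confirm that partial interdictions of a tooth truly contribute nothing: I must inspect the figure carefully to check that, whenever at least one of the two labelled edges of a tooth survives, a weight-$0$ connecting substructure remains in that tooth, so that the MST is not forced onto the heavy weight-$(M+1)$ alternative. Once this invariant is established, the counting argument based on the per-tooth cost versus the budget $C = 2k-1$ closes the upper bound cleanly.
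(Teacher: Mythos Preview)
Your reading of the comb instance is incorrect, and this undermines both directions. Each tooth actually contains three options: a weight-$0$ edge of interdiction cost $k-1$, a weight-$1$ edge of interdiction cost $1$, and a weight-$M$ edge of cost $\infty$; there is no single weight-$(M+1)$ alternative, and the uninterdicted MST has weight $0$.

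For the lower bound, interdicting only the two labelled edges of one tooth costs $k$ and forces the weight-$M$ edge there, yielding $F(X)=M$, not $M+1$. The paper reaches $M+1$ by additionally interdicting the cost-$(k-1)$ edge of a \emph{second} tooth with the remaining budget $k-1$, which forces the weight-$1$ edge in that tooth and adds $+1$.

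For the upper bound, your key claim that partial interdiction of a tooth contributes nothing is false: interdicting only the cost-$(k-1)$ (weight-$0$) edge of a tooth forces the weight-$1$ edge into the MST, contributing $+1$. So the reduction to ``count fully interdicted teeth'' collapses. The paper's argument is different: interdicting the cost-$1$ (weight-$1$) edge is useless unless the cost-$(k-1)$ edge in the same tooth has also been interdicted (otherwise the weight-$1$ edge is not even in the MST), and since $C=2k-1$ at most two cost-$(k-1)$ edges can be taken. A short case analysis on whether one or two such edges are interdicted, together with the residual budget, then caps the attainable MST weight at $M+1$.
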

\begin{proof}
Notice that of the edges with non-infinite interdiction cost, they all either have
cost $1$ or cost $k-1$, and there is one of each in each tooth.
Furthermore, in each tooth, it is necessary to interdict the edge of cost $k-1$
to gain any advantage from interdicting the edge of cost $1$, because otherwise
the cost $1$ edge will not be in the MST.
Since the capacity is $2k-1$, we can interdict at most two of the edges
of cost $k-1$. If we interdict one of them, say in the first tooth,
then we can additionally interdict the edge of cost $1$ in that tooth
which will increase the MST weight from $0$ to $M$.
This leaves $k-1$ capacity remaining, with which we can interdict the $k-1$
cost edge in one more tooth, which increases the MST weight to $M+1$.
\end{proof}

\begin{proposition}
\label{thm:dpWorstCase}
The value of the upper bound $f(1,C)+F(\emptyset)$ on a comb instance $\mathcal{I}(k,M)$
is at least $k(M-1)+1=\Theta(m)\cdot \OPT$.
\end{proposition}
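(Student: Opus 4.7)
The plan is to exhibit an explicit ``interdiction chain'' whose cumulative $\bound$-contributions, summed by the recursion defining $f$, meet the target. Unrolling the recursion gives, for every increasing index sequence $i_1<\cdots<i_s$ with $\sum_j c_{i_j}\le C$,
\[
f(1,C) \;\ge\; \sum_{j=1}^s \bound\!\left(i_j,\; C-\sum_{j'<j} c_{i_{j'}}\right),
\]
so it suffices to produce one such chain with total at least $k(M-1)+1$.

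I would label the cost-$(k-1)$ edge of the $t$-th tooth by $a_t$ and the cost-$1$ edge by $b_t$, and order $E$ compatibly with \labelcref{ass:order} so that $(a_1,b_1,b_2,\ldots,b_k)$ is an increasing index sequence (this is possible because the $a_t$ can be placed among the very low weight edges and the $b_t$ among the next tier). The chain I take is exactly $(a_1,b_1,b_2,\ldots,b_k)$. Its total interdiction cost is $(k-1)+k\cdot 1=2k-1=C$, so it is feasible, and the remaining capacity when $b_t$ is processed is $k-t+1\le k$.

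The central estimate is $\bound(b_t,r)\ge M-1$ for every $t$ whenever $r\le k$. This is exactly where the bound is loose: $\bound$ does not record which tooth's cost-$(k-1)$ edge has actually been interdicted, only the total capacity used. When \cref{alg:computeGFnImproved} processes $b_t$, the edge $a_t$ already appears in the earlier-edge subgraph $E'$ with its true cost $k-1$, so the $u$--$v$ min cut between the endpoints of $b_t$ is at most $k-1$. Therefore $C-r\ge k-1$ (equivalently, $r\le k$) avoids the zero-out case on \cref{algline:zerobound}. Walking through the inner loop on \cref{algline:innerloop}, I would identify the first later edge $e^\star$ along the weight-$M$ detour path of tooth $t$ whose inclusion with cost $\infty$ raises the min cut above $C-r$; this gives $\bound(b_t,r)\le w_{e^\star}-w_{b_t}=M-1$, matching the initialization on \cref{algline:init}. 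A similar but easier argument gives $\bound(a_1,C)\ge 1$, arising from the replacement of $a_1$ by a slightly heavier edge within tooth $1$.

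Summing, $f(1,C)\ge\bound(a_1,C)+\sum_{t=1}^{k}\bound(b_t,\,k-t+1)\ge 1+k(M-1)$. Combined with $F(\emptyset)\ge 0$, $m=\Theta(k)$, and $\OPT=M+1$ from the previous proposition, this yields $f(1,C)+F(\emptyset)\ge k(M-1)+1=\Theta(m)\cdot\OPT$ as claimed. The main obstacle is the graph-theoretic verification in the third paragraph: that the min $u$--$v$ cut in $E'$ is really at most $k-1$ (so that the guard never fires for the relevant $r$), and that the later-edge loop really realizes a replacement of weight exactly $M-1$ above $w_{b_t}$. This reduces to a careful reading of the comb in \cref{fig:badGuy} --- once both $a_t$ and $b_t$ are absent, the only alternative $u$--$v$ route in tooth $t$ passes through an edge of weight $M$ --- which is tedious but routine. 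Everything else is bookkeeping around the DP recursion.
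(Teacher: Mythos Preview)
Your overall plan---interdict one cost-$(k-1)$ edge $a_1$, then all $k$ cost-$1$ edges $b_1,\dots,b_k$, and sum the $\bound$-contributions---is exactly the paper's strategy. The difference is in how you justify the key estimate $\bound(b_t,r)\ge M-1$. You attempt to trace through \cref{alg:computeGFnImproved}; the paper instead invokes the defining property \cref{eqn:gFnCond2} directly. Concretely: set $X=\{a_t\}$ (the cost-$(k-1)$ edge on the \emph{same} tooth as $b_t$), so that $c(X)=k-1\le C-r$ whenever $r\le k$. Then $F(X)=1$ and $F(X\cup\{b_t\})=M$, so $\bound(b_t,r)\ge F(X\cup\{b_t\})-F(X)=M-1$. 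The same device gives $\bound(a_1,C)\ge 1$ by taking $X=\emptyset$. This two-line argument replaces your entire third paragraph and avoids any dependence on the internal ordering of the weight-$0$ edges or the precise geometry of the comb.

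Your algorithmic trace is salvageable but, as written, has a sign confusion: you conclude ``this gives $\bound(b_t,r)\le M-1$,'' which is the wrong direction (and not what you need). What you actually require is that every $j>b_t$ for which \cref{algline:setbound} fires at the relevant $r$ has $w_j\ge M$, \emph{and} that \cref{algline:zerobound} does not fire; together with the initialization this yields $\bound(b_t,r)\ge M-1$. Establishing the first of these forces you to argue that adding the other $b_s$ with infinite cost does not raise the $u$--$v$ min cut above $C-r$, which in turn depends on the comb's structure in a way you leave unchecked. None of this is wrong in principle, but it is a detour: once you have \cref{eqn:gFnCond2}, the semantic argument is both shorter and ordering-independent.
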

\begin{proof}
Recall that edges are sorted by increasing weight; hence,
all of the weight $0$ edges come first,
followed by the weight $1$ edges, and then the weight $M$ edges.
Notice that the MST of a comb instance has weight $0$ because the subgraph with weight $0$ edges is a spanning tree.

Suppose that in the DP we first interdict some edge $i$ with cost $k-1$.
This causes the MST weight to increase by $1$ (since the edge of weight 1 in that tooth must now be used),
and hence $\bound(i,C)\ge1$.
We claim that interdicting this one edge allows us to interdict
every edge of cost $1$, while gaining an upper bound weight of at least $M-1$ from each edge,
despite the fact weight from interdicting these edges does not contribute any actual weight.
This plan costs
$k-1+k=2k-1=C$, and produces an upper bound of $f(1,C)+F(\emptyset)=1+k(M-1)+0$
as desired.

To show that the claim holds, it suffices to show that for every edge $j$
of cost 1 and weight 1 and for every $1\le r\le C-(k-1)$,
we have $\bound(j,r)\ge M-1$. Let $j'$ be the edge of cost $k-1$ on the same tooth as $j$.
We define $X=\{j'\}$. Then $\sum_{k\in X}c_k=k-1\le C-r$.
Observe that $F(X\cup\{j\})=M$ and $F(X)=1$. This completes
the proof.
\end{proof}
It is easy to see that the maximum spanning tree of a comb instance
has weight $kM$ (this corresponds to including the edge of weight $M$ in
the spanning tree from each of the $k$ teeth). So, \cref{thm:dpWorstCase} shows that
the upper bound is nearly returning the maximum spanning tree weight and
hence the analysis is reasonably tight.

It is interesting to note that the same proof would work even if $\delta$ satisfied $\propbound(\delta,\state,\update)$
with equality. So, the reason why the bound is poor is because
our state space $T$ is not large enough. The existence of a pseudopolynomial
sized $T$ for which the upper bound has worst-case performance of $o(m)\OPT$
is an open question.
While the worst-case performance may appear to be quite poor, our computational
experiments in \cref{sec:comp} suggest that the average case is significantly better.

\subsection{Strengthening}
\label{sec:strengthen}
The strength of our upper bounds depend upon the set $T$ and the ability to
exploit $T$ to get strong bounds on the discrete derivative problem.
In this section we present a generic way to strengthen the bounds,
which comes at an exponential cost in running time and memory usage,
but has the practical advantage that it allows us to compute
stronger bounds for hard instances, which significantly improves
solver performance.

The idea explored in this section is to extend $T$ to store {\em complete}
information about whether the first $p$ elements of the matroid were interdicted,
for some small $p$. Evidently, if $p=m$, then this extended $T$ would completely describe
the interdiction set $X$, and hence the interdiction problem could be solved
exactly just by the upper bound. But, this requires $|T|=\Omega(2^p)$, so
we have to choose $p$ to be small in order for the DP table to be of
a practical size.

Given the set $T$ and functions $\state$, $\update$, and $\delta$ defined
for a matroid $M$, we define $T_p$, $\state_p$, and $\update_p$ for all
$p\in\{0,1,\dots,m\}$ as follows:
\begin{align*}
	T_p&=T\times \{X\cap\{1,2,\dots,p\}:X\in\U\}\\
	\update_p(i,(s,X))&=\begin{cases}
		(\update(i,s), X\cup\{i\})&\text{if $i\le p$}\\
		(\update(i,s), X)&\text{otherwise}\\
	\end{cases}\\
	\state_p(X)&=(\state(X), X\cap\{1,2,\dots,p\})
\end{align*}
The function $\bound_p(i,r,(s,X))$ is a bit less straightforward to define.
In essence, we want to delete all elements of $X$ from the matroid, 
and for all $j\notin X$ with $j\le p$, forbid $j$ from being interdicted.
In \cref{alg:computeGFnImproved}, this amounts to modifying the loop on \cref{algline:initloop}
to skip over elements $j\le p$ such that $j\in X$ while subtracting their capacity from $C$,
and setting elements $j\le p$ such that $j\notin X$ to have infinite cost $c'_j$.
Other matroids must be handled on a case-by-case basis depending on
the definition of $\bound$.

We then define $f_p(i,r,(s,X))$ to be the same as $f(i,r,s)$, but using
the functions $\state_p$, $\update_p$, and $\bound_p$.
It is easy to see that for all $i$, $r$, and $s$, we have
$f_0(i,r,(s,\emptyset))=f(i,r,s)$,
and 
\begin{align*}
f_p(i,C-c(X_{<i}),(\phi(X_{<i}),X_{\le p}))\le f(i,C-c(X_{<i}),\phi(X_{<i}))
\end{align*}
for $p\ge0$ and $X\in\U$. However, $f_p$ is still a valid upper bound
by \cref{thm:ubCorrect}, as desired.

\section{Exact algorithms}
\label{sec:exact}

In this section, we describe a branch-and-bound algorithm for exactly solving the
matroid interdiction problem. The performance of our algorithm crucially depends on
the upper bounds derived in \cref{sec:ub}.
We also use a heuristic greedy lower bound to improve computational performance further.

\subsection{Greedy lower bound}
\label{sec:greedy}

We use a simple greedy heuristic to find an initial feasible solution (i.e., lower bound)
to seed the branch-and-bound algorithm with. Although this heuristic does not
have any performance guarantee, we find that for MST interdiction,
it improves performance in practice
as it is able to find good solutions faster than is possible by branching alone.
In fact, the solution returned by the heuristic has an average optimality gap of 1.48\%
over the entire data set. A full performance analysis can be found in \cref{sec:perfDetails}.

The basic idea behind the heuristic is to repeatedly select an element to interdict which
maximizes the ratio of basis weight increase to interdiction cost. To formalize this,
suppose we start with some feasible solution $(X,Y)$. For $e\in Y$ let $R(Y,e)$ be the replacement element for
$e$ if $e$ is interdicted from $Y$. We could select the element $e\in Y$ to interdict (i.e., add to $X$)
which maximizes $(w_{R(e)}-w_e)/c_e$ (we call this quantity the {\em efficiency}).
$X$ and $Y$ are then updated accordingly and
this process repeated until there is no item which can be added
to $X$ without going over capacity.

This idea is not completely new; a similar algorithm appeared previously
for the case where all costs are 1 \cite{bazgan2012efficient}.
However, we modify the element selection slightly based on analyzing some cases where we noticed
poor performance. The fundamental issue with this algorithm
is that it may be necessary to interdict some element(s) with low efficiency
in order to have the opportunity to interdict an element with large efficiency.
So, by always choosing the item with the highest immediate benefit, we are in some sense
behaving too greedily and could potentially miss out on the largest efficiency gains.

To address this issue, we modify the selection rule to look ahead along {\em replacement chains}.
We define the replacement chain for an element $e\in Y$ as the sequence $r_0,r_1,\dots r_s$ where $r_0=e$
and $r_i=R(Y\cup\{r_0,\dots,r_{i-1}\},r_{i-1})$ for $i\ge1$. The sequence ends when there is
not enough capacity to interdict the next element that would have been be added,
i.e., $s$ is the smallest index such that $c(X\cup\{r_0,\dots,r_s,r_{s+1}\}) > C$.
The replacement chain for an element $e\in Y$ can be computed using the independence oracle in polynomial time
with a single scan through the elements, because we know that $r_0<r_1<\dots<r_s$ due
to our element ordering assumption.

We utilize the replacement chains as follows.
In each step of the heuristic, instead of selecting the element $e\in Y$ which maximizes
$(w_{R(e)}-w_e)/\max\{1,c_e\}$, we instead select the element $e\in Y$ which maximizes
\begin{align*}
	\max\left\{(w_{r_i}-w_{e})/\textstyle\sum_{j=0}^ic_{r_j}:i=1,2,\dots s\right\}.
\end{align*}

\begin{algorithm}[t]
	\Fn{\LowerBound{}}{
		$X\gets\emptyset$\;
		Let $Y$ be a minimum weight basis of $M\sm X$\;
		\While{true}{
			$b\gets -\infty$; $f\gets \bot$\;
			\For{$e\in Y$ such that $c_e\le C-c(X)$}{
				Let $r_0,r_1,\dots,r_s$ be the replacement chain for $e$ in $Y$\;
				$v\gets \max\left\{(w_{r_i}-w_{e})/\textstyle\sum_{j=0}^ic_{r_j}:i=1,2,\dots s\right\}$\;
				\lIf{$b < v$}{$b\gets v$; $f\gets e$}
			}
			\lIf{$f=\bot$}{\Break}
			$X\gets X\cup\{f\}$\;
			$Y\gets Y\sm\{f\}\cup\{R(Y,f)\}$\;
		}
		\Return $(X,Y)$
	}
    \caption{Greedy lower bound.}
	\label{alg:greedy}
\end{algorithm}

The full algorithm is described in \cref{alg:greedy}.
It is easy to see that the output is
a feasible solution because edges are only added
to $X$ when they fit within the capacity, and we maintain that $Y$ is a minimum weight basis
of $M\sm X$ at each iteration.
To integrate this heuristic into our branch and bound scheme, we compute \LowerBound{}
at the beginning of the search, and use it as the initial lower bound.

\begin{figure}
	\centering
	\includegraphics[height=2.5in]{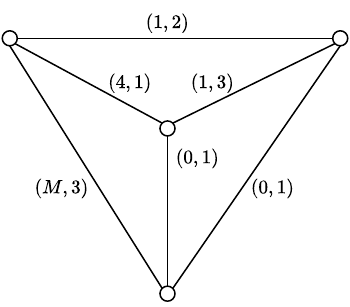}
	\caption{Worst-case instance for the greedy lower bound. Each edge $e$ is labelled with
	$(w_e,c_e)$, and $C=2$.}
	\label{fig:greedyWorstCase}
\end{figure}

In \cref{fig:greedyWorstCase} we show a family of instances parameterized by an integer $M$
for which the heuristic returns an arbitrarily bad solution (with approximation ratio $\Theta(M)$).
Specifically, the greedy lower bound only interdicts the edge labelled $(1,2)$---yielding an MST of weight 4---but
the optimal solution is to interdict both edges labelled $(0,1)$,
which increases the MST weight to $M+2$.
There are countless possible variations of this heuristic and
surely better variants to be found.
However, we found this algorithm to be remarkably effective in
practice---it often finds the optimal solution---so
we did not attempt to improve it further in this work. Discussion regarding the strength of the 
heuristic in practice can be found in \cref{sec:perfDetails}.

\subsection{Branch-and-bound algorithm}

Our branch-and-bound algorithm for solving matroid interdiction exactly is now straightforward to describe, given
the upper and lower bounds from previous sections.
Throughout this section we assume elements are ordered according to \labelcref{ass:order}.
In essence, we recursively enumerate feasible solutions by branching on whether $e\in X$ (i.e., whether $e$ is interdicted)
for $e=1,\dots,m$. At each step, an optimal $Y\in\L(X)$ is maintained by the use of
a dynamic data structure. Since the elements are sorted by non-decreasing weight within each component,
when an element is interdicted from $Y$ it is always replaced by a later element.
So, we do not need to branch on every element: branching on the next element which is also in $Y$ suffices.
The algorithm is formalized in \cref{alg:bnb} and correctness is proved in \cref{thm:bnbworks}.
With branching alone our algorithm is asymptotically faster than the worst-case running time of previous enumerative algorithms
up to polylog factors; this is formalized in \cref{thm:bnbtime}.
However, the biggest improvement in our algorithm over previous algorithms
is the pruning we can accomplish using the upper bounds from \cref{sec:ub}.

\begin{algorithm}[t]
	In a background thread, compute the upper bound $f_p(1,C,\phi(\emptyset),\emptyset)$ with a progressively increasing
	number of prefix bits $p=0,1,\dots$ until the memory usage limit is reached\;
	$X^*,Y^*\gets\LowerBound{}$\;
	$d\gets\DynInterdict.\Init{}$\;
	\Fn{\BranchAndBound{$i$}}{
		\lIf{$w(d.Y) > w(Y^*)$}{$X^*,Y^*\gets (d.X,d.Y)$}
		\lIf{$i=\rank(E)$ or $d.e>m$ or $\min\{c_e:e\ge d.e\}>C-c(d.X)$}{\Return}\label{algline:bnbbasecase}
		\If{some upper bound has finished computation}{
			Let $p=\max\{p:$ computation of $f_p$ has finished$\}$\;
			\lIf{$f_p(d.e, C-c(d.X), d.X\cap\{1,\dots,p\}) + w(d.Y) \le w(Y^*)$}{\Return}\label{algline:boundtest}
		}
		\If{$c_{d.e} \le C-c(d.X)$}{
			\lIf{$d.\Interdict{}=false$}{\Output error}
			$\BranchAndBound{$i$}$\label{algline:takeelem}\;
			$d.\UnInterdict{}$\;
		}
		$d.\Skip{}$\;
		$\BranchAndBound{$i+1$}$\label{algline:ignoreelem}\;
		$d.\UnSkip{}$\;
	}
	\BranchAndBound{$0$}\;
	\Output $(X^*,Y^*)$\;
    \caption{Branch and bound algorithm to exactly solve matroid interdiction.}
	\label{alg:bnb}
\end{algorithm}

The \BranchAndBound function recursively enumerates solutions.
It has one parameter $i$, and refers to global variables
$d$, $X^*$ and $Y^*$.
The parameter $i$ is the number of basis elements that have been skipped (i.e., not interdicted).
When $i=\rank(E)$, there are no more basis elements to interdict.
The global variable $d$ is an instance of \DynInterdict, a dynamic data structure which keeps track
of the optimal lower-level solution as elements are added and removed from the interdiction set,
and the variables $X^*$ and $Y^*$ hold the current best known incumbent solution.
Given an instance $d$ of the \DynInterdict structure, the following variables and operations are available:
\begin{itemize}
\item $d.X$ is the current set of interdicted elements. Initially this is $\emptyset$.
\item $d.Y$ is the lexicographically smallest basis of $M\sm d.X$.
\item $d.e$ returns the current element being considered for interdiction. Initially, this
is the first (smallest weight) element of $d.Y$.
\item $d.\Skip()$ advances $d.e$ to point to the next element of $d.Y$ following $d.e$.
\item $d.\UnSkip()$ undoes a $\Skip()$ operation.
It is only valid to call this if the last (non-undo) operation was a $\Skip()$ operation.
\item $d.\Interdict()$ adds $d.e$ to $d.X$, and replaces $d.e$ in $d.Y$ with the replacement
element for $d.e$ (given the current $d.X$).
If no such replacement element exists, the function should return false.
It then advances $d.e$ to point to the smallest
element of $d.Y$ after the original $d.e$ (note that this next element is not necessarily the replacement
element for the original $d.e$).
\item $d.\UnInterdict()$ undoes an $\Interdict()$ operation.
It is only valid to call this if the last (non-undo) operation was an $\Interdict()$ operation.
\end{itemize}

We will describe the actual implementation of this data structure that we use later; for now
it is enough to know that the operations satisfy this description.
We now prove the two main results of this section. \cref{thm:bnbworks} establishes
correctness of the algorithm, and \cref{thm:bnbtime} establishes a worst-case
running time bound.

\begin{theorem}
\label{thm:bnbworks}
\cref{alg:bnb} correctly solves the matroid interdiction problem.
\end{theorem}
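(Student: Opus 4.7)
The plan is to establish correctness via three claims: (i) the dynamic data structure $d$ satisfies the invariants $d.X \in \U$ and $d.Y = \bmin(d.X)$ throughout the search; (ii) for every optimal feasible $\tilde X \in \U$ with $\lopt(\tilde X) = \OPT$, the recursion visits a node with $(d.X, d.Y) = (\tilde X, \bmin(\tilde X))$ unless pruning has already installed an incumbent of equal value; and (iii) every pruning step is safe with respect to the current incumbent.

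Claim (i) follows by a routine induction on the operations invoked on $d$: the guard $c_{d.e} \le C - c(d.X)$ placed before every \Interdict{} call preserves $d.X \in \U$, and the specified semantics of \Interdict{}, \Skip{}, \UnInterdict{}, and \UnSkip{} maintain $d.Y = \bmin(d.X)$. The initial incumbent returned by \LowerBound{} is feasible by construction (\cref{sec:greedy}), so $w(Y^*) \le \OPT$ is an invariant of the search.

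For (ii), the key reduction is that one may assume without loss of generality that every $e \in \tilde X$ satisfies $e \in \bmin(\tilde X \cap \{1,\dots,e-1\})$: if this fails, then by the greedy characterization of $\bmin$ (whose behaviour on elements $\le e$ depends only on $\tilde X \cap \{1,\dots,e-1\}$) we also have $e \notin \bmin(\tilde X)$, so removing $e$ from $\tilde X$ leaves the follower's basis unchanged while only decreasing $c(\tilde X)$. Under this normalization I would show by induction on recursion depth that the algorithm can maintain the invariant $d.X = \tilde X \cap \{1,\dots,d.e-1\}$ by interdicting exactly when $d.e \in \tilde X$ and skipping otherwise. The inductive step requires showing that when $d.e$ advances (after either \Skip{} or \Interdict{}) to some new value $d.e'$, no element $e \in \tilde X$ lies in $\{d.e+1,\dots,d.e'-1\}$: any minimal such offender $e$ would, by the normalization, lie in $\bmin(\tilde X \cap \{1,\dots,e-1\}) = \bmin(d.X')$, contradicting that $d.e'$ is the smallest element of the new $d.Y$ greater than the old $d.e$. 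The recursion eventually reaches a leaf at which $(d.X, d.Y) = (\tilde X, \bmin(\tilde X))$ and the incumbent update installs value $\OPT$. The three base-case conditions on \cref{algline:bnbbasecase} correspond precisely to situations in which no further interdiction can change $d.Y$ (all $\rank(E)$ basis slots committed; $d.e$ past the last element; no remaining element fits the budget).

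Claim (iii) is immediate from \cref{thm:ubCorrect} applied with $i = d.e$ and $\hat X = d.X$ (using $d.X \se \{1,\dots,d.e-1\}$ so that $d.X_{<d.e} = d.X$, together with $\lopt(d.X) = w(d.Y)$): the value $f_p(d.e, C - c(d.X), d.X \cap \{1,\dots,p\})$ upper bounds $\max\{\lopt(X) - w(d.Y) : X \in \U,\, X_{<d.e} = d.X\}$, so if its sum with $w(d.Y)$ does not exceed $w(Y^*)$ then no completion of $d.X$ explored in the subtree can surpass the incumbent. The main obstacle I expect is the detailed verification of the inductive step in (ii), and in particular confirming that the somewhat subtle advancement rule for $d.e$ after \Interdict{} --- which jumps to the smallest element of the \emph{new} $d.Y$ greater than the original $d.e$, rather than to the replacement element itself --- is fully compatible with the normalization of $\tilde X$ and never forces the algorithm to skip a branch the normalized $\tilde X$ requires.
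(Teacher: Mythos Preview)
Your decomposition into (i)--(iii) mirrors the paper's proof closely. What you call the ``normalization'' of $\tilde X$ (every $e\in\tilde X$ lies in $\bmin(\tilde X_{<e})$) is essentially the paper's notion of an \emph{undominated} interdiction set, and your inductive step verifying that $d.e$ never jumps past an element of the normalized $\tilde X$ is the same argument the paper gives (the paper phrases it as ``$x_p$ must be in $d.Y$''). Your concern about the \Interdict{} advancement rule is unfounded: the new $d.e'$ is the smallest element of the updated $d.Y$ exceeding the old $d.e$, and since by the normalization the next element of $\tilde X$ lies in $\bmin(d.X')=d.Y'$, it cannot be skipped. Claim (iii) is handled identically in both proofs via \cref{thm:ubCorrect}.

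There is one genuine omission. You implicitly assume $\OPT<\infty$, but the problem statement allows $\L(X)=\emptyset$ for some $X\in\U$, in which case $\OPT=\infty$ and the algorithm is supposed to output an error (via the test $d.\Interdict{}=\text{false}$). The paper devotes its final paragraph to showing this detection is correct: a minimal such $X$ is undominated, so its prefix $X\sm\{\max X\}$ is enumerated, and the subsequent \Interdict{} call fails. Your proposal says nothing about this case, and in particular your normalization argument (``removing $e$ from $\tilde X$ leaves the follower's basis unchanged'') breaks down when $\bmin(\tilde X)$ does not exist. You should add a short paragraph handling infeasibility separately.
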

\begin{proof}
Before \BranchAndBound{$0$} is called, $X^*$ and $Y^*$ are initialized to a feasible
solution by the \LowerBound{} function. We show that the algorithm maintains the
invariant that $(X^*,Y^*)$ is feasible, and for every feasible solution,
the algorithm either enumerates it or proves that it is no better than the current
best known solution $(X^*,Y^*)$. It follows that the algorithm always outputs an optimal solution.

Assuming the correct implementations of the operations on the \DynInterdict structure,
it is easy to see that an element is only interdicted (added to $d.X$) if there is enough capacity
to do so, and it is maintained that $d.Y$ is the lexicographically smallest basis of $M\sm d.X$.
So, $(d.X,d.Y)$ is always a feasible solution.
Since $(X^*,Y^*)$ is only ever updated to hold the value of $(d.X,d.Y)$,
$(X^*,Y^*)$ also remains feasible at every step of the algorithm.

Now, fix some $X\in\U$, and observe that if $Y=\bmin(X)$,
then for any $X'$ such that $X\se X'\se E\sm Y$, we also have $Y=\bmin(X')$.
So, since $c\ge0$, it suffices to show that the algorithm enumerates
all $X'$ such that $\bmin(X')\ne\bmin(X)$ for every $X\subset X'$. We call such $X'$
{\em undominated}.

For now, assume that no upper bound has been computed, so that
the bound test on \cref{algline:boundtest} does not prune any nodes. We also assume that
the leader is unable to reduce the rank of the matroid for any $X\in\U$,
that is, $\bmin(X)$ always exists. We first
show that with these assumptions, the algorithm enumerates all feasible undominated
sets. After establishing this, we prove that the bound test only prunes a node
when enumerating it could not lead to an improvement in the incumbent solution $(X^*,Y^*)$,
and that if the leader is able to reduce the matroid rank, this situation is detected.

Suppose $X'=\{x_1,x_2,\dots,x_q\}$ is undominated, where $x_1<x_2<\dots<x_q$.
We claim that $X'_p:=\{x_1,x_2,\dots,x_p\}$ is also undominated for all $p\in\{0,1,\dots,q\}$.
Assume for the sake of contradiction that $X'$ is undominated but $X'_{q-1}=X'\sm\{x_q\}$ is not.
So, there is some $f\in X'\sm\{x_q\}$ such that
$\bmin(X'\sm\{x_q,f\})=\bmin(X'\sm\{x_q\})$. This implies that $f$ is not in the lexicographically smallest
basis $\B$ of $M\sm(X'\sm\{x_q\})$ (otherwise, interdicting it would cause the basis to change).
But, if $x_q$ is interdicted from $\B$, it will be replaced with an element after $x_q$ because
the elements are sorted by weight within each component. In particular, $x_q$ cannot be replaced
with $f$. So, $f$ is also not in the lexicographically smallest basis of $M\sm X'$, but this implies
that $\bmin(M\sm X')=\bmin(M\sm(X'\sm f))$, contradicting that $X'$ is undominated.

Assume that $X'\in\U$. We claim that the algorithm enumerates $X'$;
that is, there is some point where \BranchAndBound is called when $d.X=X'$.
We show by induction on $p$ that each $X'_p$ is enumerated.
Clearly, $X'_0=\emptyset$ is enumerated, because $\emptyset$ is the initial
value of $d.X$. Now suppose $p\ge1$ and that $X'_{p-1}$ is enumerated.
Consider the values of $d$ and $i$ when $X'_{p-1}$ is enumerated; if it is enumerated multiple times
then break the tie by choosing $i$ as small as possible.
So, $d.X=X'_{p-1}$, $d.Y$ is the lexicographically smallest basis of $M\sm X'_{p-1}$, and, because we chose $i$ as small
as possible, $d.e$ is the
smallest element of $d.Y$ which comes after $x_{p-1}$.
Since $X'_{p}$ is undominated, $\bmin(X'_p)\ne\bmin(X'_{p-1})$.
So, $x_p$ must be in $d.Y$; otherwise, $d.Y$ would also be the lexicographically smallest basis of $M\sm X'_p$,
contradicting that it is undominated. Hence, $d.e\le x_p$. If $d.e=x_p$, then we are done, because
the algorithm will enumerate $X'_p$ via the recursive call on \cref{algline:takeelem}.
Therefore, if we don't already have $d.e=x_p$
then there is a sequence of recursive calls on \cref{algline:ignoreelem} that advance
$d.e$ to point to $x_p$, at which point the recursive call on \cref{algline:takeelem}
will enumerate $X'_p$.

So, without the bound test on \cref{algline:boundtest}, the algorithm would have enumerated
every undominated $X'\in\U$. It remains to show that if some $X'$ is pruned
by the bound test, it could not have led to an improved solution $(X^*,Y^*)$.
Suppose that for some $X^*$, $Y^*$, $d.X$, $d.Y$, and $d.e$ we have $f_p(d.e,C-c(d.X),d.X\cap\{1,\dots,p\})+w(d.Y)\le w(Y^*)$.
Recall that $f_p(d.e,C-c(d.X),d.X\cap\{1,\dots,p\})$ is an upper bound on how much the minimum basis weight $w(d.Y)$
can increase by adding edges from $\{d.e,d.e+1,\dots,m\}$ to $d.X$.
So, if the bound test prunes this node, then no child of this node can have minimum basis
weight larger than $Y^*$, as desired.

Finally, we claim that if there is some $X\in\U$ such that there is no basis using elements
in $E\sm X$, then the algorithm correctly detects this case and outputs an error.
First, notice that if there is such an $X$, we may assume it is minimal with this property, and hence undominated.
Then, deleting the largest element from $X$ yields an undominated set $X'$, so by the same argument
used earlier, $X'$ is enumerated by the algorithm. When \Interdict{} is invoked during
the recursive call where $X'$ is enumerated, it will return false because there is no replacement
edge for $d.e$ (otherwise, $\bmin(X)$ would exist). So, the algorithm will raise an error.
\end{proof}

\begin{theorem}
\label{thm:bnbtime}
Assume that the elements of $M$ are already sorted according to \labelcref{ass:order}.
Then the worst-case running time of \BranchAndBound{$0$} in \cref{alg:bnb} is
\[
	O\left(\binom{\rank(E)+k}{\min\{\rank(E),k\}}\cdot D\right)+I
\]
where $k=\max\{|X|:X\in\U\}$, $D$ is the time required for each operation of the dynamic data structure,
and $I$ is the time required to initialize the dynamic data structure.
\end{theorem}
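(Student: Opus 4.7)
The plan is to bound the total number of recursive calls made by the recursion, and observe that each call performs only a constant number of operations on the dynamic structure $d$ (each costing $O(D)$), plus the one-time initialization cost $I$. The claimed running time will then follow immediately from multiplying the node count by $O(D)$ and adding $I$.

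I would first argue that the recursion tree is binary: each non-leaf call spawns at most two subcalls, one via \Interdict\ on \cref{algline:takeelem} and one via \Skip\ on \cref{algline:ignoreelem}. Along any root-to-leaf path, every interdict branch increases $|d.X|$ by one and is only taken when the capacity check allows it; since $|X|\le k$ for all $X\in\U$ by definition of $k$, at most $k$ interdict branches occur on any such path. Every skip branch increases $i$ by one, and the base case on \cref{algline:bnbbasecase} returns when $i=\rank(E)$, so at most $\rank(E)$ skip branches occur on any such path. Thus every root-to-leaf path has at most $k+\rank(E)$ edges.

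Next, I would identify each node in the recursion tree with the sequence of interdict/skip labels on its path from the root. Such sequences have at most $k$ interdict labels and at most $\rank(E)$ skip labels, and so correspond to lattice paths remaining inside a $\rank(E)\times k$ grid. A standard combinatorial argument then bounds the number of such sequences, and hence the number of nodes in the recursion tree, by $O\!\left(\binom{\rank(E)+k}{\min\{\rank(E),k\}}\right)$. Early terminations---via the $d.e>m$ check, the min-cost capacity check on \cref{algline:bnbbasecase}, or bound-based pruning on \cref{algline:boundtest}---only shrink the tree, so the bound still holds in the worst case.

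Finally, each recursive call performs $O(1)$ operations on $d$ (some subset of \Interdict, \UnInterdict, \Skip, \UnSkip, together with reads of $d.X$, $d.Y$, $d.e$ and an incumbent comparison), at most one $O(1)$ lookup of the precomputed bound $f_p$, and $O(1)$ arithmetic. Since each operation on $d$ costs $O(D)$ by assumption, multiplying the node count by $O(D)$ and adding the initialization cost $I$ yields the claimed running time. The main obstacle is formalizing the combinatorial bound on the number of nodes, in particular verifying that it does not grow faster once one accounts for the (at most 1-child) nodes at which the interdict branch is forbidden by the capacity check; the rest of the argument is direct bookkeeping of how $|d.X|$ and $i$ evolve along the recursion.
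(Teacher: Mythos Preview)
Your plan matches the paper's: bound the number of recursion-tree nodes, charge $O(D)$ per node, and add $I$ for initialization. The observation that any root-to-leaf path contains at most $\rank(E)$ skip edges and at most $k$ interdict edges is correct and is exactly what the paper uses.

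The gap is in your ``standard combinatorial argument.'' Counting all lattice paths that start at the origin and stay inside the $\rank(E)\times k$ rectangle yields $\sum_{a\le\rank(E)}\sum_{b\le k}\binom{a+b}{a}=\binom{\rank(E)+k+2}{k+1}-1$, which exceeds $\binom{\rank(E)+k}{\min\{\rank(E),k\}}$ by a factor of order $(\rank(E)+k)/\min\{\rank(E),k\}$; e.g.\ for $k=1$ this is $\Theta(\rank(E)^2)$ versus $\Theta(\rank(E))$, so you would only prove a weaker statement than the theorem claims. The paper sidesteps this by bounding the number of \emph{leaves} via the recurrence $t(i,j)=t(i-1,j)+t(i,j-1)$ with $t(0,j)=t(i,0)=1$, shows $t(\rank(E),k)=\binom{\rank(E)+k}{\min\{\rank(E),k\}}$, and then uses that a binary tree has at most twice as many nodes as leaves. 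Your worry about one-child nodes is in fact inverted: when the interdict branch is blocked, the recursion tree becomes a \emph{subtree} of the idealized tree in which both branches are always taken and leaves occur exactly where $i=\rank(E)$ or $|d.X|=k$; that idealized tree is full binary with $\binom{\rank(E)+k}{\min\{\rank(E),k\}}$ leaves and hence $2\binom{\rank(E)+k}{\min\{\rank(E),k\}}-1$ nodes, so one-child nodes can only shrink the count, not grow it.
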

\begin{proof}
We claim that the total number of leaves in the recursion tree is at most
$t(\rank(E),k)$ where $t(0,j)=t(i,0)=1$,
and $t(i,j)=t(i-1,j)+t(i,j-1)$ for $i,j\ge1$.
To see this, consider the following.
\begin{itemize}
\item The $i$ parameter of \BranchAndBound is incremented
whenever we recurse on \cref{algline:ignoreelem}, and when $i=\rank(E)$, the
recursion terminates. 
\item Since $k=\max\{|X|:X\in\U\}$, the algorithm can recurse at most $k$ times
on \cref{algline:takeelem} before we are guaranteed that $\min\{c_e:e\ge d.e\}>C-c(d.X)$,
and the recursion terminates in this case. We assume that $\min\{c_e:e\ge f\}$
is precomputed for all $f\in E$, so that this test takes $O(1)$ time.
\end{itemize}
So, we can think of the recursive call on \cref{algline:ignoreelem} as decrementing
the $i$ parameter of $t(i,j)$, and the recursive call on \cref{algline:takeelem}
as decrementing the $j$ parameter. Hence, the total number of leaves is at most
$t(\rank(E),k)$.

Now, we show that $t(i,j)=\binom{i+j}{\min\{i,j\}}$.
If $i=0$ or $j=0$, this is clearly the case.
So, suppose $i,j\ge1$. Then, by induction,
\begin{align*}
	t(i,j)&=t(i-1,j)+t(i,j-1)
		=\binom{i+j-1}{\min\{i-1,j\}}+\binom{i+j-1}{\min\{i,j-1\}}=\binom{i+j}{\min\{i,j\}},
\end{align*}
where the last equality follows by splitting into cases depending on whether $i$ or $j$
is larger and applying the recurrence relation for binomial coefficients.
Since the number of nodes in a binary tree is at most 2 times the number of leaves,
the total number of recursive calls is also $O(t(i,j))$.
Each call takes $O(D)$ time (ignoring time taken in child nodes),
so the total running time is as desired.
\end{proof}

In the case of MST interdiction,
by using the best-known decremental dynamic MST data structure (i.e., only supporting
edge deletion operations) of Holm et al \cite{holm2001poly},
we can get (amortized) $D=O(\log^{2}n)$, with initialization time $I=O(m\log^2n)$.
We remark that the time required to sort the matroid elements and initialize the data structure will
typically be dominated by the other terms.
It may not be immediately obvious
why it suffices to use a {\em decremental} dynamic MST data structure. The reason
for this is that we only need to be able to support edge deletions and undo operations.
The undo operation can easily be implemented to run in the same time as the edge deletion
operation by keeping track of all changes made to the data structure during the
edge deletion operations.

For the special case of MST interdiction when all interdiction costs are 1,
the previous best known running time for an enumerative algorithm is $O(n^k\log\alpha((k-1)(n+1),n))$
\cite{liang2001finding,bazgan2012efficient}.
By applying standard bounds on binomial coefficients
to the result of \cref{thm:bnbtime},
we find that for MST interdiction our algorithm runs in time
$O(\min\{(2en/k)^k,(2ek/n)^n\}\cdot D)+I$.
If we treat $n$ and $k$ as variables,
this is asymptotically faster than the previous best algorithm up to polylog factors.
Furthermore, our algorithm can handle arbitrary interdiction costs and generalizes to
arbitrary matroids.
Our algorithm also prunes many nodes using the upper bounds, and since accessing the
upper bound table takes $O(1)$ time, it does so without any increase in the asymptotic running time.
Further improvements in algorithms for the decremental dynamic MST problem would immediately
yield improvements to our algorithm as well.

The dynamic data structure we use is described formally in \cref{alg:dyninterdict}. Correctness
is easy to see: it is just an adaptation of Kruskal's algorithm.
While our dynamic data structure needs $D=O(m)$ and $I=O(m)$, and hence is theoretically slower than the data structure of Holm et al,
we found it to have excellent performance in practice, while also being considerably simpler.
This follows a common theme in the literature on dynamic MST in which simple algorithms
are repeatedly found to have better computational performance than complex
algorithms with good theoretical guarantees, for typical graphs
\cite{amato1997experimental,cattaneo2010maintaining,ribeiro2007experimental}.

Unlike other dynamic MST data structures, \cref{alg:dyninterdict}
should be easy to adapt for other matroids: only the union-find structure
needs to be replaced with a different mechanism for testing whether a set is independent in
the given matroid. As a starting point towards such an adaptation, it is easy to modify
the data structure to work for any matroid given an independence oracle. First, we
remove all references to $uf$ from \Skip, \UnSkip, and \Interdict. Then, to find the replacement edge
for $e$ in the \Interdict function, we find the lexicographically smallest basis
of $E\sm X$, remove $e$ from it, and query the independence oracle for each edge
after $e$ to see if it can be added to the basis. Depending on the specifics
of the matroid, it may be possible to improve this algorithm further.

\begin{algorithm}[p]
	\Ds{\DynInterdict}{
		\Fn{\Init{}}{
			$X\gets\emptyset$\;
			$Y\gets$ lexicographically smallest basis of $M$\;
			$e\gets$ element of $Y$ with the smallest index\;
			$uf\gets$ empty union-find structure on $n$ elements\;
			$bst\gets$ empty self-balancing binary search tree\;
			$stack\gets$ empty stack\;
		}
		\Fn{\Skip{}}{
			$uf.\Union{$e.s$, $e.t$}$\tcc*{Union the endpoints of edge $e$}
			$e\gets bst.\Next{$e$}$\tcc*{Move to the next element in $bst$ after $e$}
		}
		\Fn{\UnSkip{}}{
			$e\gets bst.\Prev{$e$}$\tcc*{Move to the element in $bst$ before $e$}
			$uf.\Undo{}$\tcc*{Undo the most recent \Union operation}
		}
		\Fn{\Interdict{}}{
			\lIf{$e=m$}{\Return false}
			\For(\tcc*[f]{Find the replacement edge for $e$}){$f=e+1,\dots,m$}{
				\lIf{$f\in Y$}{$uf.\Union{$f.s$, $f.t$}$}
				\lElseIf{$uf.\Find{$f.s$}\ne uf.\Find{$f.t$}$}{\Break}
				\lIf{$f=m$}{$f\gets\bot$}
			}
			$\Undo$ all $\Union$ operations from the above loop\;
			\lIf{$f=\bot$}{\Return false}
			$X\gets X\cup\{e\}$\;
			$Y\gets Y\sm\{e\}\cup\{f\}$\;
			$bst.\Insert{$f$}$\;
			$e\gets bst.\Erase{$e$}$\tcc*{Erase $e$ from $bst$, then replace $e$ with the next element
				after the location where $e$ was in $bst$}
			$stack.\Push{$(e,f)$}$\;
			\Return true\;
		}
		\Fn{\UnInterdict{}}{
			$e',f\gets stack.\Pop{}$\;
			$X\gets X\sm\{e'\}$\;
			$Y\gets Y\cup\{e'\}\sm\{f\}$\;
			$bst.\Erase{$f$}$\;
			$e\gets bst.\Insert{$e'$}$\tcc*{Insert $e'$ into $bst$, then set $e$ to $e'$}
		}
	}
	\caption{Dynamic interdiction data structure, specialized for MST interdiction.}
	\label{alg:dyninterdict}
\end{algorithm}

\subsection{Adaptation for the minimum cost blocker problem}
\label{sec:mincostblocker}

Recall that in the minimum cost blocker problem,
the objective is to minimize the cost required to increase the
minimum basis weight to at least $R$. In the introduction, we mentioned
that this problem is polynomial-time equivalent to the matroid interdiction
problem by performing binary search on $C$ or $R$.
In this section, we show that our branch-and-bound
scheme can in fact be modified to directly solve the minimum cost blocker
problem without incurring the overhead of performing binary search.
In \cref{sec:comp}, we compare the new algorithm (\cref{alg:bnbBlocker})
against \cref{alg:bnb} with binary search over $C$ in computational experiments, and find that \cref{alg:bnbBlocker}
is faster, as expected.

Since minimum cost blocker instances do not have a $C$ value, for the purposes of
adapting the algorithm, in the graphic matroid case we define $C=\MinCut-1$ where $\MinCut$ is
the global minimum cut of $G$ with edge costs $c$. The motivation for this is
that $\MinCut-1$ is always an upper bound on $C$ in the MST interdiction problem
(setting it any larger would allow the leader to cut the graph).
For general matroids the trivial bound $C=c(E)$ can be used.
\cref{alg:bnbBlocker} describes the modified branch-and-bound algorithm.
We omit a full proof of correctness because the proof is largely the same as the proof
of \cref{thm:bnbworks}. However, a few observations are in order.

In the edge blocker problem, a minimum $c$-cost set $X^*$ which intersects every basis is always a trivial solution,
because it causes the minimum basis weight to increase to infinity. So, we start with
such an $X^*$ as an initial incumbent solution. For graphic matroids this $X^*$ is just a global minimum cut,
but it may be NP-hard to find $X^*$ for general matroids, which we discuss later in \cref{sec:preproc}.
We then enumerate all undominated
$X$ (adopting the term {\em undominated} from the proof of \cref{thm:bnbworks}) which could potentially
lead to an improved incumbent solution $(X^*,Y^*)$. Notice that for $(X,Y)$ to be an
improved incumbent solution, $w(Y)$ must reach the target weight $R$, and the cost
of $X$ must be at most the cost of $X^*$. Furthermore, no child node of a node
with weight at least $R$ can have lesser cost, so we can prune any such node.
Similarly, no child of a node with cost at least $c(X^*)$ can have lesser cost, and since
we assume all costs are at least 1, it cannot have higher weight without also having higher cost,
so we can prune it.

The bound test $f_p(d.e, c(X^*)-c(d.X), d,X\cap\{1,\dots,p\}) + w(d.Y) < R$ works correctly because
in order to reach an improved incumbent solution, we must use
at most $c(X^*)-c(d.X)$ additional capacity, and we must reach
a solution of value at least $R$.
The background computation of the upper bound differs from the interdiction case
in that we compute $f_p(1,r,\emptyset)$ for {\em all} $r=0,1,\dots,C$.
This is necessary because the algorithm calls $f_p(e,c(X^*)-c(d.X), d.X\cap\{1,\dots,p\})$,
and $c(X^*)-c(d.X)$ is not necessarily equal to $C-c(X)$ for any $X\in\U$.

Finally, instead of interdicting the element $d.e$ whenever there is capacity to do so (as in the interdiction
problem), we instead interdict whenever a replacement element for $d.e$ exists (i.e., when
$d.\Interdict{}$ returns true).
This ensures that we enumerate all possible undominated solutions.

\begin{algorithm}[t]
	In a background thread, compute the upper bound $f_p(1,r,\phi(\emptyset),\emptyset)$ for all $r=0,1,\dots,C$
	with a progressively increasing number of prefix bits $p=0,1,\dots$ until the memory usage limit is reached\;
	Let $X^*$ be a minimum $c$-cost set which intersects every basis of $M$ and let $Y^*=\bot$\;
	$d\gets\DynInterdict.\Init{}$\;
	\Fn{\BranchAndBound{$i$}}{
		\lIf{$w(d.Y) \ge R$ and $c(d.X) \le c(X^*)$}{$X^*,Y^*\gets (d.X,d.Y)$}
		\lIf{$i=\rank(E)$ or $d.e>m$ or $w(d.Y)\ge R$ or $c(d.X) \ge c(X^*)$}{\Return}
		\If{some upper bound has finished computation}{
			Let $p=\max\{p:$ computation of $f_p$ has finished$\}$\;
			\lIf{$f_p(d.e, c(X^*)-c(d.X), d.X\cap\{1,\dots,p\}) + w(d.Y) < R$}{\Return}
		}
		\If{$d.\Interdict{}$}{
			$\BranchAndBound{$i$}$\;
			$d.\UnInterdict{}$\;
		}
		$d.\Skip{}$\;
		$\BranchAndBound{$i+1$}$\;
		$d.\UnSkip{}$\;
	}
	\BranchAndBound{$0$}\;
	\Output $(X^*,Y^*)$\;
    \caption{Modification of the branch and bound algorithm for solving the minimum cost blocker problem.}
	\label{alg:bnbBlocker}
\end{algorithm}

\subsection{Preprocessing}
\label{sec:preproc}

In this section, we discuss two preprocessing steps: one which can simplify or even solve
some matroid interdiction instances, and one which can speed up computation of the upper bound at the expense of bound quality.
The former preprocessing step has been used previously in the case of MST interdiction
\cite{zenklusen20151,linhares2017improved}. Here, we generalize the ideas to arbitrary matroids, and
describe how to perform the preprocessing in polynomial time for graphic and partition matroids.

To begin, observe that if there is some $X\in\U$ such that
$\rank(E\sm X)<\rank(E)$, then $\L(X)=\emptyset$, so the optimal value of \eqref{eq:matint} is $\infty$.
Our branch-and-bound algorithm signals an error if this
case is detected.
We would instead like to have a way of detecting this case as a preprocessing step.
The problem of finding such an $X$ is equivalent to the problem of finding a minimum cost set $X$ which
intersects every basis, known as a {\em cocircuit}.
If we have such a set $X$ which intersects every basis and $c(X)\le C$,
then $X\in\U$ and $\rank(E\sm X)<\rank(E)$.
For graphic and partition matroids, there are polynomial time algorithms
that can find a minimum cost cocircuit, which we discuss below.
However, this is NP-hard even for binary matroids (see \cite{vardy1997intractability},
which shows NP-hardness for the equivalent problem of finding the minimum distance of a
binary linear code).
So, in general we cannot do this preprocessing in polynomial time unless $\operatorname{P}=\operatorname{NP}$.

For graphic matroids, we can easily find a minimum cost cocircuit in polynomial time: such a set is simply
a global minimum cut in the graph.
Now we consider partition matroids. Suppose that the partition matroid $M=(E,\I)$ is represented by
$\I=\{J\se E:|J\cap S_i|\le k_i~\forall i\}$ for some disjoint sets $S_i$ and integers $k_i$. Then, a cocircuit is any set $X$ which
has $|X\cap S_i|>|S_i|-k_i$ for some $i$, because any basis $B$
has $|B\cap S_i|=k_i$, so $|X\cap S_i|+|B\cap S_i|>|S_i|$ and hence $X\cap B\ne\emptyset$.
Since $c_e\ge1~\forall e\in E$, any minimum cost cocircuit must only intersect a single set $S_i$,
and hence we can find such a cocircuit by using a greedy algorithm to find the
cheapest $i$ and $X$ such that $|X\cap S_i|=|S_i|-k_i+1$.

An extension of this idea can be used to simplify some problem instances, as observed in prior works
\cite{zenklusen20151,linhares2017improved}.
Let $w'_1,w'_2,\dots,w'_\ell$ be the distinct weights from $w$, and assume
$w'_1<w'_2<\dots<w'_\ell$.
For all $i$, let $E_{\le i}=\{e\in E:w_e\le w'_i\}$.
Let $k$ be the smallest index such that $\rank(E_{\le k}\sm X)=\rank(E)$
for all $X\in\U$. Then, we claim that for any $X\in\U$,
$\bmin(X\cap E_{\le k-1})=\bmin(X)$.
The following lemma is a trivial generalization of Lemma 2.2 from \cite{linhares2017improved},
which concerns only graphic matroids. The proof is almost identical, but we include it for completeness.
To derive it from the original
proof, just substitute $\rank(E)-\rank(S)+1$ for every occurrence of $\sigma(S)$, where $S$ is an arbitrary subset of $E$
(here, $\sigma(S)$ is just the number of connected components of the graph $(V,S)$). 
\begin{lemma}
For convenience, define $w'_0:=0$.
If $X\in\U$, then
\[\min\{w(Y):Y\in\L(X)\}=\sum_{i=0}^{k-1}(w'_{i+1}-w'_i)(\rank(E)-\rank(E_{\le i}\sm X)).\]
\end{lemma}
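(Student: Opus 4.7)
The plan is to decompose each weight as a telescoping sum over the distinct weight values, swap the order of summation, and then identify each layer count with a rank. Concretely, for any $e\in E$, if $w_e=w'_{j(e)}$, write
\[
w_e=\sum_{i=0}^{j(e)-1}(w'_{i+1}-w'_i),
\]
which is valid because $w'_0=0$. Summing over a basis $Y\in\L(X)$ and swapping the order of summation gives
\[
w(Y)=\sum_{i=0}^{\ell-1}(w'_{i+1}-w'_i)\,|Y\sm E_{\le i}|=\sum_{i=0}^{\ell-1}(w'_{i+1}-w'_i)\bigl(\rank(E)-|Y\cap E_{\le i}|\bigr),
\]
using $|Y|=\rank(E)$ since $Y$ is a basis.

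Next, I would take $Y=\bmin(X)$, the lexicographically smallest (equivalently, by \labelcref{ass:order}, minimum weight) basis of $E\sm X$. Since $Y$ is produced by the greedy algorithm processing elements in order of nondecreasing weight, after the algorithm finishes scanning $E_{\le i}\sm X$ it has selected a maximal independent subset of $E_{\le i}\sm X$, i.e.\ $Y\cap E_{\le i}$ is a basis of the matroid restricted to $E_{\le i}\sm X$. Therefore
\[
|Y\cap E_{\le i}|=\rank(E_{\le i}\sm X)
\]
for every $i$. Substituting into the previous expression yields
\[
\min\{w(Y'):Y'\in\L(X)\}=w(Y)=\sum_{i=0}^{\ell-1}(w'_{i+1}-w'_i)\bigl(\rank(E)-\rank(E_{\le i}\sm X)\bigr).
\]

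Finally, I would truncate the sum at $i=k-1$. By the defining property of $k$, for every $i\ge k$ and every $X\in\U$ we have $E_{\le i}\supseteq E_{\le k}$, hence $\rank(E_{\le i}\sm X)\ge\rank(E_{\le k}\sm X)=\rank(E)$, so the corresponding summands vanish. This gives the stated identity. The main conceptual step, and the only one that is not purely algebraic, is the greedy-algorithm identification $|Y\cap E_{\le i}|=\rank(E_{\le i}\sm X)$; everything else is a telescoping rewrite and the elimination of zero layers above index $k-1$.
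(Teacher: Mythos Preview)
Your proposal is correct and follows essentially the same approach as the paper: both arguments use the greedy-algorithm identification $|Y\cap E_{\le i}|=\rank(E_{\le i}\sm X)$ together with the telescoping decomposition $w'_j=\sum_{i=0}^{j-1}(w'_{i+1}-w'_i)$, and both truncate the sum at $k$ using the defining property of $k$. The only cosmetic difference is the order of operations---the paper first counts elements by weight level and then performs Abel summation, whereas you first write each weight as a telescoping sum over levels and then identify the level counts with ranks---but the content is identical.
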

\begin{proof}
Consider running the greedy algorithm to obtain a minimum weight basis of $M\sm X$.
For every $1\le j\le\ell$, the greedy algorithm includes
$\rank(E_{\le j}\sm X)-\rank(E_{\le j-1}\sm X)$ elements of weight $w'_j$.
Note that $\rank(E_{\le j}\sm X)-\rank(E_{\le j-1}\sm X)=0$ for $j>k$ because the rank of both
sets is $\rank(E)$ by the definition of $k$. Now, we have
\begin{align*}
\min\{w(Y):Y\in\L(X)\} &= \sum_{j=1}^{\ell}w'_j(\rank(E_{\le j}\sm X)-\rank(E_{\le j-1}\sm X))
	= \sum_{j=1}^{k}w'_j(\rank(E_{\le j}\sm X)-\rank(E_{\le j-1}\sm X))\\
	&= \sum_{i=0}^{j-1}(w'_{i+1}-w'_i)\sum_{j=1}^{k}(\rank(E_{\le j}\sm X)-\rank(E_{\le j-1}\sm X))\\
	&= \sum_{i=0}^{k-1}(w'_{i+1}-w'_i)\sum_{j=i+1}^k(\rank(E_{\le j}\sm X)-\rank(E_{\le j-1}\sm X))\\
	&= \sum_{i=0}^{k-1}(w'_{i+1}-w'_i)(\rank(E_{\le k}\sm X)-\rank(E_{\le i}\sm X))
	 = \sum_{i=0}^{k-1}(w'_{i+1}-w'_i)(\rank(E)-\rank(E_{\le i}\sm X)).
		\tag*{\qedhere}
\end{align*}
\end{proof}
As a consequence, it suffices to consider only interdiction sets $X$ such that
$X\se E_{\le k-1}$, because the objective value of the lower level problem has no dependence
on whether elements in $E_{\ge k}$ are interdicted.
In other words, if we define
\begin{equation*}
\bar m=\max\{e:e\in E_{\le k-1}\},
\end{equation*}
then we can assume every $e\in X$ has $e\le\bar m$.
Furthermore, we can delete all elements of weight larger than $w'_k$, because there
is always a (minimum weight) basis using only elements in $E_{\le k}\sm X$.
For matroids where we can efficiently find a minimum cost cocircuit,
we compute $\bar m$ by performing binary search to find the smallest $i$
such that minimum cost of a cocircuit in $E_{\le i}$ is greater than $C$.
\cref{alg:bnb} can be modified to benefit from this observation by replacing the
condition $d.e>m$ with $d.e>\bar m$ on \cref{algline:bnbbasecase}.
\cref{alg:bnbBlocker} can be modified similarly (using the assumption that $C=\MinCut-1$ for
the purposes of computing $\bar m$).

We now discuss the second preprocessing step.
Memory usage requirements can be an issue
for computing the upper bound $f$ (as described in \cref{sec:ub}), especially if $C$ is large. To address this we describe
a way to round the costs and capacities used in the upper bound computation while ensuring that it remains an upper bound.
This reduces the memory and time required at the expense of the quality of the bounds.
When $C$ is particularly large, performing this rounding
step can be beneficial to the overall running time.

The way we accomplish this is as follows. We pick some constant $K$, replace
all interdiction costs $c_e$ by $\lfloor c_e/K\rfloor$, and replace the capacity $C$ by $\lceil C/K\rceil$.
Since we round down the costs and round up the capacity, all feasible upper-level solutions
remain feasible, but some new upper-level solutions may become feasible which were not previously.
The result is a relaxation of the original feasible region, and since we are dealing
with a maximization problem, this produces an upper bound. In practice, we only perform this rounding
on a group of instances from the literature where the costs are on the order of $10^8$,
and hence it would be completely impractical to use the DP algorithm without rounding.
The details can be found in \cref{sec:compPrior}.

\section{Computational results}
\label{sec:comp}

We evaluated our algorithm computationally for the MST interdiction problem and
the minimum cost MST edge blocker problem.
We compare our results to two previous papers which also published computational results.
Overall, our algorithm performs exceedingly well compared to past algorithms.
For many instances, we improved the best run time from hours to seconds.
This was enough to solve all previously unsolved instances from the literature.

We implemented our algorithm in C++, compiled it using the Clang 17 compiler,
and tested it on an Intel Gold 6148 Skylake CPU @ 2.4 GHz.
Our solver was limited to use at most 32 GB of RAM.
Our code and instances are open source
and are available at
\url{https://github.com/nwoeanhinnogaehr/mstisolver}.
The correctness of our implementation was verified by comparing the output against
algorithms such as the MIP extended formulation \cite{bazgan2012efficient, wei2021integer}
or a basic brute-force algorithm
whenever it was possible for these slower algorithms to solve the instances.

This section is divided into two parts. In the first, we compare the performance
of our algorithms to previously published algorithms. The second examines in detail
how various solver features and instance parameters affect the performance of our algorithms.

\subsection{Comparison with prior works}
\label{sec:compPrior}

We compare our algorithm directly to two previous works: the (mixed) integer programming
formulations for the minimum cost MST edge blocker problem by Wei, Walteros, and Pajouh \cite{wei2021integer},
and the branch-and-bound algorithms for the special
case of MST interdiction where all interdiction costs are 1
by Bazgan, Toubaline, and Vanderpooten \cite{bazgan2012efficient}.
As the original implementations were not available, the comparison is made directly
to the running times reported in the original papers.
Despite this, we believe the gap in performance is significant enough to
rule out the possibility of the improvements coming from compute speed or implementation quality alone.
Note that with default settings, our solver uses two threads: one for branch-and-bound, and one for computing upper bounds,
so the CPU time is roughly double the wall-clock time.
The reason why we compute upper bounds on a separate thread is that it is difficult to determine
ahead of time how difficult an instance is and thus how strong of an upper bound we should invest in computing.
By doing it in parallel, we ensure that strong bounds are eventually computed for hard instances and that
time is not wasted on strong bounds for easy instances.
In \cref{sec:perfDetails}, we show that using two threads in this way does not offer any significant advantage over
periodically switching between branch-and-bound search and upper bound computation on a single thread,
and hence it is fair to compare our algorithm to other serial algorithms.
Since it is unclear if the
previous papers used a parallelized implementation or whether they measured CPU time or wall-clock time,
we report all running times as CPU time to ensure as fair a comparison as possible.

The minimum cost MST edge blocker tests performed by Wei, Walteros, and Pajouh were run on a
12-core Intel Xeon E5-2620 v3 CPU at 2.4GHz with 128 GB of RAM, using C++ and Gurobi 8 \cite{wei2021integer}.
We tested our algorithm on the instances available for download in the online supplementary material for the paper.
For these instances, the authors used a parameter $\gamma\in\{0.05,0.3,0.7,0.95\}$ to set the value of $R$ as a function
of the minimum and maximum spanning tree weight. Specifically, they set $R=(\overline w-\underline w)\gamma+\underline w$
where $\overline w$ is the maximum spanning tree weight and $\underline w$ is the minimum
spanning tree weight. In our tables we only report results for $\gamma=0.05$,
because we found that for $\gamma\in\{0.3,0.7,0.95\}$, the optimal solution for all instances 
is for the leader to use a minimum cut. Our algorithm is able to very quickly
handle this trivial case, solving all such instances in just a few milliseconds.
The instances are grouped by number of vertices $n$ and approximate number of edges $\tilde m$.
Exactly how $\tilde m$ is determined is described in the original paper \cite{wei2021integer}.
We remark that the instances available in the supplementary material do not exactly
match what is described in the paper. Specifically, the data appears to include
20 instances for each value of $n$ and $\tilde m$, but the paper only references 12.
We tested all 20, as we found no indication of which subset of these instances were originally
tested. We also adjusted the weights and costs of the downloaded instances to fall in the range $[20,80]$, as reported in the paper. In any case, this adjustment had no measurable impact on performance.

Since these instances have weights and costs specified with 6 digits of decimal precision
but our implementation only supports integer weights and costs, we scaled up all values
by $10^6$ so that they became integers. However, costs of this magnitude are not suitable
for use with our pseudopolynomial time DP bounds, so to handle this we rounded the costs
and capacities used in the DP bound as described in \cref{sec:preproc}, with $K=10^5$.
This reduced the costs to integers less than 1000,
which are of an appropriate magnitude for the DP algorithm. We emphasize that this
does not affect the accuracy of the solution output because
only the bounds are weakened; the costs and capacity used in the branching part of the algorithm are not rounded.

\begin{table}
    \centering
	\small
	\begin{minipage}{0.52\textwidth}
	\centering
    \pgfplotstabletypeset[
        col sep=comma, 
        string type,
        column type=l,
        every head row/.style={
            before row={
                \multicolumn{2}{c|}{}&\multicolumn{1}{c|}{BnB} & \multicolumn{1}{c|}{BS-BnB} & \multicolumn{2}{c}{Best from \cite{wei2021integer}}\\
                \bottomrule
            },
        },
        precision=3,
        fixed,
        display columns/0/.style={column type=l, string type, column name=$n$},
        display columns/1/.style={string type, column name=$\tilde m$},
        display columns/2/.style={column type=|l, numeric type, column name=Time},
        display columns/3/.style={column type=|l, numeric type, column name=Time},
        display columns/4/.style={column type=|l, numeric type, column name=Time},
        display columns/5/.style={column type=l, numeric type, column name=Opt\%},
		every row no 0/.style={before row=\hline},
		every row no 2/.style={before row=\hline},
		every row no 5/.style={before row=\hline},
		every row no 8/.style={before row=\hline},
        ]{data/wei-table.csv}
    \caption{Optimality percentage and average running times for instances by Wei, Walteros, and Pajouh \cite{wei2021integer}.
	Our algorithms (BNB and BS-BnB) solved all instances to optimality.}
	\label{tab:wei}
	\end{minipage}\qquad
	\begin{minipage}{0.42\textwidth}
	\centering
    \pgfplotstabletypeset[
        col sep=comma, 
        string type,
        column type=l,
        every head row/.style={
            before row={
                \multicolumn{2}{c}{} \vrule & \multicolumn{1}{c}{BnB} \vrule & \multicolumn{1}{c}{Best from \cite{bazgan2012efficient}}\\
				\bottomrule
            },
            after row=\hline,
        },
        precision=3,
        fixed,
        display columns/0/.style={string type, column name=$n$},
        display columns/1/.style={string type, column name=$C$},
        display columns/2/.style={column type=|l, numeric type, column name=Time},
        display columns/3/.style={column type=|l, numeric type, column name=Time},
		every row no 4/.style={before row=\hline},
		every row no 8/.style={before row=\hline},
		every row no 11/.style={before row=\hline},
		every row no 14/.style={before row=\hline},
		every row no 17/.style={before row=\hline},
		every row no 20/.style={before row=\hline},
		every row no 21/.style={before row=\hline},
		every row no 22/.style={before row=\hline},
        ]{data/bazgan-table.csv}
    \caption{Average running times for instances by Bazgan, Toubaline, and Vanderpooten \cite{bazgan2012efficient}.
	Both approaches solved all instances to optimality.}
    \label{tab:bazgan}
	\end{minipage}    
\end{table}

We report our results on these instances in \cref{tab:wei}.
In this table, each row corresponds to the instances with a specific value of $n$ and $\tilde m$.
Under the headings `BnB' and `BS-BnB' we report the average CPU time in seconds for \cref{alg:bnbBlocker}
and binary search over the value of $C$ in \cref{alg:bnb}, respectively.
Both of our algorithms solved all instances to optimality.
Under the heading `Best from \cite{wei2021integer}' we include the best result from any algorithm
described in the paper by Wei, Walteros, and Pajouh. The original data can be found in
Table 2 of their paper. The `Time' column is the best average running time in seconds
for any of the three algorithms EXT, CST-1, or CES (or 7200 seconds if the time-out is reached), and the `Opt\%' column
is the best percentage of instances solved to optimality.
Excluded from the table are the real-world instances, for which the optimal solution to all instances is
the minimum cut.
The real-world instances are solved within a few seconds by our algorithms, which is slower
than the time reported in \cite{wei2021integer}; the reason is that determining $\bar m$ for these instances
is slow because the graphs are very large. If computation of $\bar m$ is disabled, they are solved almost instantly.

The MST interdiction tests by Bazgan, Toubaline, and Vanderpooten were performed on a 3.4GHz processor with 3 GB of RAM, using
the C programming language \cite{bazgan2012efficient}.
The exact instances they used were not available for download, but the method used to generate the
instances was described precisely, so we were able to recreate the data set (up to the choice of random values).
Specifically, all graphs are complete graphs on $n$ vertices, the weights are integers chosen uniformly
at random in the range $[0,100]$, and all interdiction costs are 1.
For each choice of $n$ and interdiction budget $C$, the original paper generated 10 instances.
To reduce any bias from the random choice of weights on the average running time we instead
generated 100 instances for each choice of parameters.

We report our results on these instances in \cref{tab:bazgan}.
In this table, each row corresponds to the instances with a specific value of $n$ and $C$.
The columns under the heading `BnB' describe the results for \cref{alg:bnb}.
Under the `Best from \cite{bazgan2012efficient}' heading, we include the best result reported from the original paper,
across each of their four algorithms. The original data for these columns
can be found in Table 2 of their paper. The meaning of the individual columns
is the same as in \cref{tab:wei}. Note that our algorithm solved all instances
to optimality, as did the best algorithm from the original paper.

In both \cref{tab:wei,tab:bazgan}, our branch and bound algorithms can clearly be seen to offer a substantial
improvement in running time over all previous mixed integer programming and branch and bound algorithms.

\subsection{Detailed performance evaluation}
\label{sec:perfDetails}

While the previous section has established the competitive performance of our algorithm,
the instances which appeared in the literature do not test some important cases. For example,
all of the minimum cost MST edge blocker instances are relatively sparse, and all of the
MST interdiction instances are complete graphs with unit costs and a very small interdiction budget.
In this section we test some newly generated instances to better understand how the performance
of our algorithm depends on various qualities of an instance. We also perform a feature knockout
test to determine which algorithm features are most important for good performance.

We chose to focus on generating MST interdiction instances which are small (in terms of $n$ and $m$)
but which take a long time to solve using current solvers. In fact, many of these new instances with only 20 vertices take
significantly longer to solve than instances from the literature with 400 vertices.
The 1024 new instances were generated as follows. For each choice of $n\in\{10,15,20,25\}$,
$\gamma\in\{0.25,0.5,0.75,1\}$, $d\in\{0.5,0.66,0.83,1\}$, $c_{\max}\in\{1,100,1000,10000\}$,
and $w_{\max}\in\{2,100,10000,1000000\}$, we generated one instance.
The instance has $n$ vertices, $m=\lfloor d\binom n2\rfloor$ edges (chosen uniformly at random), integer costs $c$ uniformly
distributed in $[1,c_{\max}]$, integer weights $w$ uniformly distributed in $[1,w_{\max}]$
and capacity $\lfloor \gamma(M-1)\rfloor$ where $M$ is the cost of the global minimum cut for costs $c$.
Note that $M-1$ is the largest nontrivial capacity, as any larger would produce an
instance in which the leader is able to cut the graph.

In \cref{tab:newInstances}, we report the results for these new instances.
The rows of the table are split into five sections, each with four rows;
in each section the instances are grouped based on the value of one of
the five instance parameters $n$, $\gamma$, $d$, $c_{\max}$, or $w_{\max}$.
That is, each section reports results for all instances, but instances
are aggregated differently in different sections.

The first four data columns, `Opt\%', `Time', `MaxTime'
indicate the percentage of instances solved to optimality, average CPU time, and maximum CPU time, respectively.
All instances were tested with a 1 hour (3600 second) CPU time limit. For instances where the limit is reached,
the time limit is used in place of the running time.
The next four columns describe the quality of bounds: `LB' and `MaxLB' are the average and maximum percent gap between
the greedy lower bound and the optimal solution, and `UB' and `MaxUB' are the average and maximum percent gap between
the (strengthened) upper bound and the optimal solution. The gap percentage is computed as $100\cdot|z^*-z|/z^*$ where
$z$ is the approximate solution weight (lower or upper bound) and $z^*$ is the optimal solution weight.
The `UBTime' column reports the average number of seconds required to compute the upper bound $f(1,C,\phi(\emptyset))$.
For any $p$, the time required to compute $f_p(1,C,\phi(\emptyset),\emptyset)$ can be accurately
estimated by multiplying this value by $2^p$.
Only instances which were solved to optimality are included in the average for the last four columns.

The results indicate that the number of vertices $n$ is the best predictor of instance difficulty:
while instances with $n=10$ are all solved within 20 milliseconds, around 7\% of the instances with $n=25$ could not
be solved within 1 hour.
However, to be hard an instance must have high density $d$ and high capacity factor $\gamma$.
This matches the behaviour seen in instances from the literature.
The instances in \cref{tab:wei} are solved very quickly, and the graphs are very sparse (i.e., $d$ is small).
On the other hand, if $\gamma$ is small, as is the case in \cref{tab:bazgan}, instances are still largely tractable as well.
The parameters $c_{\max}$ and $w_{\max}$ have less impact on the performance.
Instances with smaller $c_{\max}$ are typically solved faster because the DP table is smaller and can
be computed more quickly. On the other hand, instances with smaller $w_{\max}$ are somewhat harder, but only
by a small factor; we are not sure why this is the case.
We remark that the number of branch-and-bound nodes is often on the order of $10^9$ for the hardest instances.
This is expected since our algorithm only needs $O(m)$ time per node.

Surprisingly, we found that across all instances,
the greedy lower bound is 1.48\% from optimal on average, and is never
worse than 17.82\% from optimal. These statistics hold for instances from the literature as well.
In \cref{sec:greedy} we saw that the worst case can be arbitrarily bad even for
a graph on four vertices, but evidently the performance on random graphs is much better.
Therefore, this greedy algorithm alone could serve as a very strong heuristic in applications where
an exact algorithm is not needed.

As for the strengthened DP upper bounds, they have an average optimality gap of 14.76\%
at the root node, and are never worse than 68.24\%. The parameter which has the largest correlation with
the time required to compute the bounds is $c_{\max}$, because the running time is directly a function of it.
The running time also depends on $m$, but as the magnitude of $m$ is much smaller, it has less of an impact.
While there is no clear connection
between $n$ and the bound strength, all other parameters do appear to be correlated with the bound strength.
In \cref{sec:mstworstcase} we showed
an example with an optimality gap linear in $n$, but evidently this case is not commonly seen in random graphs.
This offers some indication as to why the branch-and-bound algorithm performs so well:
although it is possible for the upper bounds to be weak, the bounds are used across a large number
of subproblems. Due to the good average-case performance, it appears to be quite challenging to
construct an instance where the bounds are consistently weak for all subproblems. The hardest
instances we currently know of have $d=1$, $\gamma=1$,
and very large $c_{\max}$. However, these instances are still far from the
theoretical worst case.

\begin{table}
    \centering
	\small
    \pgfplotstabletypeset[
        col sep=comma, 
        string type,
        column type=l,
        every head row/.style={
            before row={
            },
            after row=\hline,
        },
        precision=2,
        fixed,
        display columns/0/.style={string type},
        display columns/1/.style={column type=|l, numeric type, column name=Opt\%},
        display columns/2/.style={numeric type, column name=Time},
        display columns/3/.style={numeric type, column name=MaxTime},
        display columns/4/.style={numeric type, column name=LB},
        display columns/5/.style={numeric type, column name=MaxLB},
        display columns/6/.style={numeric type, column name=UB},
        display columns/7/.style={numeric type, column name=MaxUB},
        display columns/8/.style={precision=4, numeric type, column name=UBTime},
		every row no 3/.style={before row=\vspace{1mm}},
		every row no 4/.style={after row=\hline},
		every row no 8/.style={before row=\vspace{1mm}},
		every row no 9/.style={after row=\hline},
		every row no 13/.style={before row=\vspace{1mm}},
		every row no 14/.style={after row=\hline},
		every row no 18/.style={before row=\vspace{1mm}},
		every row no 19/.style={after row=\hline},
        ]{data/new-all.csv}
    \caption{Aggregated statistics for new instances.}
	\label{tab:newInstances}
\end{table}

\begin{table}
    \centering
	\small
    \pgfplotstabletypeset[
        col sep=comma, 
        string type,
        column type=l,
        every head row/.style={
            before row={&\multicolumn{2}{c|}{All features} & \multicolumn{2}{c|}{No UB} & \multicolumn{2}{c|}{Weak UB}& \multicolumn{2}{c|}{No LB} & \multicolumn{2}{c}{Single thread}
				\vspace{1mm}\\
				\hline
            },
            after row=\hline,
        },
        precision=2,
        fixed,
        display columns/0/.style={column type=l|, string type},
        display columns/1/.style={column type=l, numeric type, column name=Opt\%},
        display columns/2/.style={numeric type, column name=Time},
        display columns/3/.style={column type=|l, numeric type, column name=Opt\%},
        display columns/4/.style={numeric type, column name=Time},
        display columns/5/.style={column type=|l, numeric type, column name=Opt\%},
        display columns/6/.style={numeric type, column name=Time},
        display columns/7/.style={column type=|l, numeric type, column name=Opt\%},
        display columns/8/.style={numeric type, column name=Time},
        display columns/9/.style={column type=|l, numeric type, column name=Opt\%},
        display columns/10/.style={numeric type, column name=Time},
		every row no 3/.style={before row=\vspace{1mm}},
		every row no 4/.style={after row=\hline},
		every row no 8/.style={before row=\vspace{1mm}},
		every row no 9/.style={after row=\hline},
		every row no 13/.style={before row=\vspace{1mm}},
		every row no 14/.style={after row=\hline},
		every row no 18/.style={before row=\vspace{1mm}},
		every row no 19/.style={after row=\hline},
        ]{data/knockout-table.csv}
    \caption{Knockout test.}
	\label{tab:knockout}
\end{table}

We now examine the impact of different solver features on performance in a knockout test.
The features considered in the test are: (1) the DP upper bound,
(2) the strengthened DP upper bound,
(3) the greedy lower bound, and (4) the computation of branch-and-bound search and upper bounds in parallel.
For these tests we again used a 3600 CPU second time limit.
The results are summarized in \cref{tab:knockout}.
As usual, the `Opt\%' column is the percentage of instances solved to optimality,
and the `Time' column is the average CPU time in seconds. Instances
which reached the time limit are counted as having taken 3600 seconds.
The aggregation of instances into sections is the same as in \cref{tab:newInstances}.

It can easily been seen from the table that disabling the upper bound entirely
or disabling the strengthened upper bound causes a clear reduction in performance,
especially on more difficult instances.
This is true even considering that without the upper bound test, our branch-and-bound
implementation is able to enumerate nodes about two times faster.
On the other hand, disabling the greedy lower bound has no clear impact
on the optimality percentage and the running time.
This may be surprising considering how we found in \cref{tab:newInstances}
that the greedy lower bound has exceptionally good performance. The lack of any
significant impact on the exact algorithm suggests that typically the
branch-and-bound scheme very quickly finds solutions of quality
comparable to the greedy lower bound.

For the single-threaded variant of the solver (feature column `Single thread'), the implementation
specifically does the following. A timer is set initially to expire after $0.01$ seconds.
We then repeat the following until the problem is solved.
First, branch-and-bound search begins. When the timer expires, it is reset and the solver switches to
performing upper bound computation. When the timer expires again, the upper bound computation
is interrupted, the length of the timer is doubled, and the process repeats (i.e., branch-and-bound resumes,
but with double the amount of time on the timer). This method of alternating between search and bounds
ensures easy instances are solved quickly by branching (without spending a lot of time on unnecessary bound computations)
but for hard instances we spend sufficient time on improving the bounds.
As can be seen from the table, this change to the solver only has a minimal impact on the performance
and hence it is fair to compare our (technically multi-threaded) algorithm to other single-threaded algorithms,
especially considering the wide gaps in performance seen in \cref{sec:compPrior}.

\section{The matroid inclusion-interdiction problem}
\label{sec:dual}

Typically, interdiction problems consider an adversary who is excluding some structure from being used
in the solution to the lower level problem. However, besides being motivated by applications, the choice
of excluding rather then {\em including} is perhaps arbitrary. In this section we consider this dual
problem in which the leader forces certain elements to be included in the follower's basis, and show
that the two problems are connected through matroid duality.

Let $M=(E,\I)$ be a matroid and let $M^*=(E,\I^*)$ be its dual. Let $\B$ be the set of bases
of $M$, and let $\B^*$ be the set of bases of $M^*$. Let $w\in\Z^E$.
By the definition of the dual matroid, we know that $Y\in\B^*$ if and only if $E\sm Y\in\B$.
This immediately yields the following.
\begin{proposition}\label{prop:inclusion}\begin{align}
&\max\{\min\{w(Y):Y\in\B^*,Y\se E\sm X\}:X\in\U\}\label{eqn:dualA}\\
=w(E)-&\min\{\max\{w(Y):Y\in\B,X\se Y\}:X\in\U\}.\label{eqn:dualB}
\end{align}
\end{proposition}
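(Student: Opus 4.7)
The plan is to prove the identity by a change of variables $Z = E\sm Y$, exploiting the defining property of the dual matroid, namely that $Y\in\B^*$ if and only if $E\sm Y\in\B$. Since the two sides differ only by the flip $Y \leftrightarrow E\sm Y$ combined with the constant shift $w(E)$, this should reduce to a direct rewriting rather than a structural argument.

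First I would fix an arbitrary $X\in\U$ and rewrite the inner optimization
\[\min\{w(Y):Y\in\B^*,\,Y\se E\sm X\}\]
by substituting $Z = E\sm Y$. The constraint $Y\in\B^*$ becomes $Z\in\B$, the containment $Y\se E\sm X$ becomes $X\se Z$, and the objective transforms as $w(Y) = w(E) - w(Z)$. Thus the inner min equals
\[w(E) - \max\{w(Z):Z\in\B,\,X\se Z\}.\]

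Second, I would take the outer $\max$ over $X\in\U$ on both sides. Pulling the constant $w(E)$ out of the outer maximum converts $\max_X(w(E) - g(X))$ into $w(E) - \min_X g(X)$, which is exactly the right-hand side of the proposition.

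The only subtlety to address is the case where one of the feasible sets becomes empty for some $X\in\U$: if no $Y\in\B^*$ satisfies $Y\se E\sm X$, the inner minimum is conventionally $+\infty$, and correspondingly no $Z\in\B$ satisfies $X\se Z$, so the inner maximum on the right is $-\infty$. Under the bijection $Y\mapsto E\sm Y$ between $\B^*$ and $\B$, these two conditions are equivalent, so the infinities match up consistently with the identity $w(E) - (-\infty) = +\infty$. Thus there is no real obstacle; the work is just in verifying this bookkeeping.
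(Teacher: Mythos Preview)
Your proposal is correct and follows exactly the paper's approach: the paper simply states that the identity ``immediately'' follows from the defining bijection $Y\in\B^*\iff E\sm Y\in\B$, which is precisely the change of variables $Z=E\sm Y$ you carry out. If anything, your argument is more detailed than the paper's, which omits the bookkeeping about empty feasible sets entirely.
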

This proposition connects these two types of interdiction problems
on matroids: \labelcref{eqn:dualA}, in which the leader can force elements to be excluded from a basis,
and \labelcref{eqn:dualB}, in which in which the leader
can force elements to be included in a basis. We will call this inclusion variant
of the problem the {\em matroid inclusion-interdiction problem}.

Since our exact algorithms do not require non-negative weights, \cref{prop:inclusion} implies that
we can solve the matroid inclusion-interdiction problem on a matroid $M$ whenever
we can solve the matroid interdiction problem on the dual of $M$. The same holds for the minimum cost
blocker variants.
The following is now immediate from the fact that both partition matroids and
graphic matroids from planar graphs are self-dual.
\begin{corollary}
	There is a pseudopolynomial time algorithm for the partition matroid inclusion-interdiction problem.
\end{corollary}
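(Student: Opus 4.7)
The plan is to combine Proposition~\ref{prop:inclusion} with Corollary~\ref{cor:solvepartition} via the fact that the class of partition matroids is closed under matroid duality. First, I would verify this closure claim: if $M$ is the partition matroid defined by disjoint parts $E_1,\dots,E_\ell$ and caps $k_1,\dots,k_\ell$, then its bases are exactly the sets $B\se E$ with $|B\cap E_i|=k_i$ for all $i$. Taking complements, the bases of $M^*$ are exactly the sets $B^*\se E$ with $|B^*\cap E_i|=|E_i|-k_i$ for all $i$, so $M^*$ is the partition matroid on the same parts with caps $k_i^*=|E_i|-k_i$. This dual description can be produced in polynomial time from the primal description.

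Next, given an instance of partition-matroid inclusion-interdiction on $M$ with weights $w$, costs $c$, and capacity $C$, I would apply Proposition~\ref{prop:inclusion} to rewrite it as $w(E)$ minus the min-max interdiction problem \eqref{eqn:dualB} with the same $\U$. As noted in Section~\ref{sec:problemstatement}, the min-max variant on $M$ is equivalent to the max-min variant on the dual matroid $M^*$ after negating weights (this exchange is what converts ``force include'' into ``exclude'' on the dual). Since $M^*$ is a partition matroid by the previous paragraph, Corollary~\ref{cor:solvepartition} yields a pseudopolynomial-time algorithm for solving the resulting max-min matroid interdiction instance. Finally, adding $w(E)$ to recover the original objective value introduces only $O(|E|)$ additional work.

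Since all reductions in the above chain are polynomial and the final interdiction solve runs in time $O(m^3 C)$ by Corollary~\ref{cor:solvepartition}, the overall algorithm is pseudopolynomial in the input size, establishing the corollary. The only real content to check is the self-duality of the class of partition matroids, which I expect to be the main (mild) technical point; everything else is a direct composition of earlier results.
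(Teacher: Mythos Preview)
Your overall plan matches the paper's one-line justification: the dual of a partition matroid is again a partition matroid, so Proposition~\ref{prop:inclusion} reduces inclusion-interdiction on $M$ to ordinary matroid interdiction on $M^*$, which Corollary~\ref{cor:solvepartition} solves in pseudopolynomial time.

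However, your second paragraph garbles the reduction. Proposition~\ref{prop:inclusion} already does the entire translation in one step: the inclusion-interdiction value on $M$ (the $\min$-$\max$ appearing in line \eqref{eqn:dualB}) equals $w(E)$ minus the \emph{max-min} exclusion value on $M^*$ (expression \eqref{eqn:dualA}), and \eqref{eqn:dualA} is literally an instance of \eqref{eq:matint} on the partition matroid $M^*$. No weight negation is needed. Your sentence ``the min-max variant on $M$ is equivalent to the max-min variant on the dual matroid $M^*$ after negating weights'' is false as stated and is not what Section~\ref{sec:problemstatement} says: negating weights swaps max-min and min-max for the \emph{same} lower-level feasible region $\L(X)$; no dual matroid enters. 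The passage from ``force include'' on $M$ to ``exclude'' on $M^*$ comes from basis complementation in Proposition~\ref{prop:inclusion}, not from weight negation. Once you drop the spurious negation step and correct the reference (the quantity subtracted from $w(E)$ is \eqref{eqn:dualA}, not \eqref{eqn:dualB}), the argument is exactly the paper's.
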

\begin{corollary}
	\cref{alg:bnb} (with bounds computed by \cref{alg:computeGFnImproved}) can be used
	to solve the planar graphic matroid inclusion-interdiction problem.
\end{corollary}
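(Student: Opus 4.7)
The plan is to chain together three previously established facts. First, I would invoke \cref{prop:inclusion}, which rewrites the matroid inclusion-interdiction problem on $M$ as (a constant minus) a min-max matroid interdiction problem on $M$ with the same ground set and lower-level feasibility set equal to the bases of $M^*$. In other words, solving the inclusion-interdiction problem on $M$ reduces to solving a min-max interdiction problem whose lower level is the minimum-weight-basis problem on the dual matroid $M^*$.

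Next, I would use the reduction noted in \cref{sec:problemstatement}, that min-max interdiction on a matroid is equivalent to max-min interdiction on the same matroid with negated weights. Combined with the previous step, this turns the planar graphic inclusion-interdiction problem on $M$ into a max-min matroid interdiction problem on $M^*$. As the excerpt already warns, the weights produced by this reduction can be negative, so I would explicitly note (as the paper does) that the derivations in \cref{sec:ub,sec:exact} and the correctness proofs of \cref{alg:bnb} and \cref{alg:computeGFnImproved} never invoke $w\ge 1$ — only $c\ge 1$ and the ordering assumption \labelcref{ass:order} are used. A constant shift and relabelling of ground-set elements within each component restores \labelcref{ass:order} for the new weight vector, and this does not change the optimal interdiction set.

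Finally, I would apply the classical fact that if $G$ is planar and $G^*$ is a planar dual of $G$, then the graphic matroid of $G^*$ is the dual of the graphic matroid of $G$. Hence $M^*$ is itself a graphic matroid (of $G^*$), so \cref{alg:bnb} with bounds from \cref{alg:computeGFnImproved} is applicable to the resulting interdiction instance. Reading off $X^*$ from the output and reporting its value under the original (pre-shift, pre-negation) objective via the identity in \cref{prop:inclusion} yields an optimal solution to the original inclusion-interdiction problem.

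The main obstacle is not any of these individual steps — each is either cited, proved, or standard — but the bookkeeping to confirm that all assumptions invoked by \cref{alg:bnb} and \cref{alg:computeGFnImproved} survive the reduction. Concretely, I would want to check: (i) the interdiction costs $c$ and capacity $C$ are unchanged by the reduction, so $c\ge 1$ is preserved; (ii) the weight transformation (negation plus an additive constant chosen large enough to make weights nonnegative, if desired) does not alter the lexicographic ordering used by the dynamic data structure beyond a relabelling consistent with \labelcref{ass:order} inside each connected component of $M^*$; and (iii) the minimum-cut subroutine used inside \cref{alg:computeGFnImproved} is applied in $G^*$ rather than $G$, which is valid because $M^*$ is the graphic matroid of $G^*$. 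With these verifications the corollary follows immediately.
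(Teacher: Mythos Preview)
Your proposal is correct and mirrors the paper's reasoning: the paper declares the corollary immediate from \cref{prop:inclusion} together with the fact that graphic matroids of planar graphs are self-dual, plus the remark (made just before the corollary) that the exact algorithms do not require nonnegative weights. Your write-up simply spells this chain out in more detail, including the bookkeeping on assumptions; one minor point is that \cref{prop:inclusion} already delivers a max-min exclusion problem on $M^*$ directly (see \labelcref{eqn:dualA}), so the negation step is only needed if you start from the other variant of inclusion-interdiction.
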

We do not know how to apply our techniques from \cref{sec:ub} to get bounds
when $M$ is the bond matroid of a non-planar graph, so it remains open to derive good bounds
for matroid inclusion-interdiction on non-planar graphic matroids or in even more general settings.

\section{Conclusion}

We have presented new combinatorial algorithms and theoretical results for matroid interdiction and related problems.
In the special case of MST interdiction, our algorithms achieve state-of-the-art computational performance,
improving on the previous best results by a few orders of magnitude.

There are a number of interesting directions for future work.
It is natural to consider the generalization of these methods to the min-max problem where the follower wants to find
a maximum weight {\em independent set} (as opposed to a maximum weight basis, as has been the focus of this paper);
this would provide a connection with the rank reduction problem \cite{joret2015reducing}.
We hope that our bound framework will be applied to other matroids (some obvious next targets are laminar matroids, bond matroids, 
transversal matroids, and gammoids), and to other interdiction problems more broadly.

\section*{Acknowledgements}
This research was enabled in part by computational resources provided by Calcul Qu\'ebec (\url{https://calculquebec.ca}) and the Digital Research Alliance of Canada (\url{https://alliancecan.ca}).

\bibliographystyle{spbasic}
\bibliography{references}

\appendix
\section{Proof of \texorpdfstring{\cref{lem:directsum}}{Theorem \ref{lem:directsum}}}
\label{apx:directsum}
\begin{proof}
First we show that $\propbound(\bound,\state,\update)$ is satisfied. Let $r\in\{0,\dots,C\}$, $s\in T$ and $i\in E$ such that
$c_i\le r$.
We claim that for all $X\in\U$,
\begin{align}
i\le p\implies F_{M_1}(X_{<i}\cup\{i\})-F_{M_1}(X_{<i})=F_M(X_{<i}\cup\{i\})-F_M(X_{<i}),\label{eqn:dsclaim1}\\
i> p\implies F_{M_2}(X_{<i}\cup\{i\})-F_{M_2}(X_{<i})=F_M(X_{<i}\cup\{i\})-F_M(X_{<i}).\label{eqn:dsclaim2}
\end{align}
To see this, observe that for any $X'\se E$, $F_M(X')=F_{M_1}(X')+F_{M_2}(X')$
because the independent sets of $M_1$ and $M_2$ are disjoint. Now, considering what elements
can be independent in each matroid, if $i\le p$, then
$F_{M_2}(X_{<i}\cup\{i\})=F_{M_2}(\emptyset)=F_{M_2}(X_{<i})$ and if $i>p$ then
$F_{M_1}(X_{<i}\cup\{i\})=F_{M_1}(X\cap\{1,\dots,p\})=F_{M_1}(X_{<i})$, establishing the claim.

Now we establish $\propbound(\bound,\state,\update)$.
If $i\le p$, then
\begin{align*}
\bound(i,r,s)&=\bound_1(i,r,s)\\
	&\ge\max\{F_M(X_{<i}\cup\{i\})-F_M(X_{<i}):X\in\U,\,C-c(X_{<i})=r,\,\state_1(X_{<i})=s\}
		\tag{by $\propbound(\bound_1,\state_1,\update_1)$ and \cref{eqn:dsclaim1}}\\
	&=\max\{F_M(X_{<i}\cup\{i\})-F_M(X_{<i}):X\in\U,\,C-c(X_{<i})=r,\,\state(X_{<i})=s\}.
		\tag{because $i\le p$ implies $\state(X_{<i})=\state_1(X_{<i})$}
\end{align*}
So, assume $i>p$. If $s\in T_2$ then
\begin{align*}
\bound(i,r,s)&=\bound_2(i,r,s)\\
	&\ge\max\{F_M(X_{<i}\cup\{i\})-F_M(X_{<i}):X\in\U,\,C-c(X_{<i})=r,\,\state_2(X_{<i})=s\}
		\tag{by $\propbound(\bound_2,\state_2,\update_2)$ and \cref{eqn:dsclaim2}}\\
	&\ge\max\{F_M(X_{<i}\cup\{i\})-F_M(X_{<i}):X\in\U,\,C-c(X_{<i})=r,\,\state(X_{<i})=s\}
\end{align*}
where the last inequality follows because if $X\in\U$ satisfies $\state(X_{<i})=s$,
then since $s\in T_2$, we know $\state(X_{<i})=\state_2(X_{<i})$.
Finally, if $i>p$ and $s\in T_1$, then
\begin{align*}
\bound(i,r,s)&=\bound_2(i,r,\state_2(\emptyset))\\
	&\ge\max\{F_M(X_{<i}\cup\{i\})-F_M(X_{<i}):X\in\U,\,C-c(X_{<i})=r,\,\state_2(X_{<i})=\state_2(\emptyset)\}
		\tag{by $\propbound(\bound_2,\state_2,\update_2)$ and \cref{eqn:dsclaim2}}\\
	&\ge\max\{F_M(X_{<i}\cup\{i\})-F_M(X_{<i}):X\in\U,\,C-c(X_{<i})=r,\,\state(X_{<i})=s\}
\end{align*}
where the last inequality follows because if $X\in\U$ satisfies $\state(X_{<i})=s$,
then since $s\in T_1$, we have that $X\se\{1,\dots,p\}$. Hence, $\state_2(\emptyset)=\state_2(X_{<i})$
by \cref{eqn:dsassume}.

Now we show that $\propiinx(\state,\update)$ holds.
We consider four cases:
\begin{enumerate}
\item
If $X_{<i}\se\{1,\dots,p\}$ and $i\le p$: since $\propiinx(\state_1,\update_1)$ holds,
\[\state(X_{<i}\cup\{i\})=\state_1(X_{<i}\cup\{i\})=\update_1(i,\state_1(X_{<i}))=\update(i,\state(X_{<i})).\]

\item
If $X_{<i}\se\{1,\dots,p\}$ and $i>p$: by $\propiinx(\state_2,\update_2)$ and \cref{eqn:dsassume},
\[\state(X_{<i}\cup\{i\})=\state_2(X_{<i}\cup\{i\})=\update_2(i,\state_2(X_{<i}))=\update_2(i,\state_2(\emptyset))=\update(i,\state(X_{<i})).\]

\item
If $X_{<i}\not\se\{1,\dots,p\}$ and $i\le p$: this is impossible, as $X_{<i}\not\se\{1,\dots,p\}$ implies $i>p$.

\item
If $X_{<i}\not\se\{1,\dots,p\}$ and $i>p$: since $\propiinx(\state_2,\update_2)$ holds,
\[\state(X_{<i}\cup\{i\})=\state_2(X_{<i}\cup\{i\})=\update_2(i,\state_2(X_{<i}))=\update(i,\state(X_{<i})).\]
\end{enumerate}

Finally, we show that if there exists some $\alpha\ge0$ such that
$\propapprox(\bound_j,\state_j,\update_j,\alpha)$ is satisfied for both $j=1$ and $j=2$, then
$\propapprox(\bound,\state,\update,\alpha)$ is satisfied.
Let $X\se\U$, $i\in E$.
If $i\le p$, then by $\propapprox(\bound_1,\state_1,\update_1,\alpha)$ and \cref{eqn:dsclaim1},
\begin{align*}
\bound(i,C-c(X_{<i}),\state(X_{<i}))&=\bound_1(i,C-c(X_{<i}),\state(X_{<i}))\\
	&\le\alpha+F_M(X_{<i}\cup\{i\})-F_M(X_{<i}).
\end{align*}
So, assume $i>p$. Then, if $X\not\se\{1,\dots,p\}$, then by $\propapprox(\bound_2,\state_2,\update_2,\alpha)$ and \cref{eqn:dsclaim2},
\begin{align*}
\bound(i,C-c(X_{<i}),\state(X_{<i}))&=\bound_2(i,C-c(X_{<i}),\state(X_{<i}))\\
	&\le\alpha+F_M(X_{<i}\cup\{i\})-F_M(X_{<i}).
\end{align*}
Otherwise, if $X\se\{1,\dots,p\}$, 
then by \cref{eqn:dsassume}, $\propapprox(\bound_2,\state_2,\update_2,\alpha)$ and \cref{eqn:dsclaim2},
\begin{align*}
\bound(i,C-c(X_{<i}),\state(X_{<i}))&=\bound_2(i,C-c(X_{<i}),\state_2(\emptyset))\\
	&=\bound_2(i,C-c(X_{<i}),\state_2(X_{<i}))\\
	&\le\alpha+F_M(X_{<i}\cup\{i\})-F_M(X_{<i}).\tag*{\qedhere}
\end{align*}
\end{proof}

\end{document}